\documentclass[a4paper,unpublished,onecolumn]{quantumarticle}
\pdfoutput=1

\usepackage[colorlinks,allcolors=quantumviolet]{hyperref}
\usepackage[numbers,sort&compress]{natbib}
\usepackage[phfqit,spalign]{tchshortcuts}
\usepackage{amsthm}
\usepackage[
	print-unity-mantissa=false,
	range-phrase=--
]{siunitx}
\usepackage{booktabs}
\usepackage{subcaption}
\usepackage[rightcaption]{sidecap}
\usepackage{algorithm}
\usepackage{algpseudocodex}
\usepackage[capitalise,nameinlink]{cleveref}
\usepackage{stmaryrd}
\usepackage{xcolor}
\usepackage[compat=0.6]{yquant}
\AddToHook{cmd/appendix/before}{
	\crefalias{section}{appendix}
	\crefalias{subsection}{appendix}
}

\newtheorem{theorem}{Theorem}
\newtheorem{cor}{Corollary}
\newtheorem{prop}{Proposition}
\newtheorem{lem}{Lemma}
\theoremstyle{definition}

\newtheorem{eg}{Example}
\newtheorem*{eg*}{\Cref{eg:d3_color_code_acceptance_probability} cont}
\theoremstyle{remark}

\crefname{process}{process}{processes}
\crefname{prop}{Proposition}{Propositions}
\crefname{lem}{Lemma}{Lemmas}
\crefname{eg}{Example}{Examples}

\usetikzlibrary{positioning, arrows.meta, shapes, calc}

\definecolor{lightred}{rgb}{1, 0.7, 0.7}
\definecolor{darkred}{rgb}{0.9, 0, 0}
\definecolor{lightgreen}{rgb}{0.7, 1, 0.7}
\definecolor{darkgreen}{rgb}{0, 0.6, 0}
\definecolor{darkyellow}{rgb}{0.7, 0.7, 0}
\definecolor{zblue}{RGB}{76,166,255}
\tikzset{
    xerror/.style={darkred, ultra thick},
    yerror/.style={darkgreen, ultra thick},
    sxerror/.style={darkyellow, ultra thick},
    every picture/.append style={scale=0.85},
}
\yquantset{
    xerror/.style={style=darkred, control style={ultra thick}},
    yerror/.style={style=darkgreen, control style={ultra thick}},
	dashedcnot/.style={style=darkred, control style={dashed}},
}
\tikzset{
	msc wire above/.style={
		preaction={
			draw=#1,
			line width=2.6pt,
			opacity=0.6,
			transform canvas={yshift=1.3pt},
		},
	},
	msc wire below/.style={
		preaction={
			draw=#1,
			line width=2.6pt,
			opacity=0.6,
			transform canvas={yshift=-1.3pt},
		},
	},
	msc control left/.style={
		preaction={
			draw=#1,
			line width=2.6pt,
			opacity=0.6,
			transform canvas={xshift=-1.3pt},
		},
	},
	msc control right/.style={
		preaction={
			draw=#1,
			line width=2.6pt,
			opacity=0.6,
			transform canvas={xshift=1.3pt},
		},
	},
	msc event/.style={text=black},
}

\makeatletter
\newcommand{\yquantclipfrombackground}[1]{%
  \csletcs{pgf@sh@cliphorz@#1}{pgf@sh@bg@#1}%
}
\makeatother
\yquantclipfrombackground{diamond}
\yquantclipfrombackground{regular polygon}

\newcommand{\materialsaddress}{\affiliation{Department of Materials,
University of Oxford,
Parks Road,
Oxford OX1 3PH,
United Kingdom}}

\newcommand{\engaddress}{\affiliation{Department of Engineering Science, University of Oxford, Parks Road, Oxford OX1 3PJ, United Kingdom}}

\newcommand{\mathaddress}{\affiliation{Mathematical Institute, University of Oxford, Woodstock Road, Oxford OX2 6GG, United Kingdom}}

\newcommand{\qmaddress}{\affiliation{Quantum Motion, 9 Sterling Way, London N7 9HJ, United Kingdom}}

\newcommand{\qiqbaddress}{\affiliation{Center for Quantum Information and Quantum Biology, The University of Osaka, 1-2 Machikaneyama, Toyonaka 560-0043, Japan}}

\newcommand{\impaddress}{\affiliation{Department of Computing, Imperial College London, 180 Queen's Gate, London SW7 2AZ, United Kingdom}}

\begin{document}

\title{Diagnosing and Restoring the Degraded Fault Distance of Magic State Cultivation}
% \title{Hook Clifford Errors Degrade the Fault Distance of Magic State Cultivation}

\author{Tim Chan}
\email{timothy.chan@materials.ox.ac.uk}
\materialsaddress
\qiqbaddress
\orcid{0000-0001-6187-7402}

\author{Armands Strikis}
\mathaddress
\qmaddress

\author{Zhu Sun}
\materialsaddress
\mathaddress
\qmaddress

\author{Zhenyu Cai}
\impaddress
\qmaddress
\engaddress
\email{z.cai1@imperial.ac.uk}

\begin{abstract}
	T-state cultivation is a resource-efficient protocol producing logical T states
	but recent benchmarks show that its logical error rate
	is considerably higher than intended
	i.e.\ than S-state cultivation,
	which is the analogous protocol for producing logical S states.
	In this paper,
	we explain this T--S discrepancy by analytically showing,
	under circuit-level depolarising noise,
	that distance-3 (-5) T-state cultivation has fault distance 2 (3)
	due to Pauli hook errors that propagate to
	coherent Clifford errors after its final double-check circuit;
	such errors remain Pauli in S-state cultivation,
	which consequently retains fault distance 3 (5).
	As part of our analysis we derive a general formula,
	and an $\mathcal O(n^3)$-time algorithm for fixed logical-qubit count,
	for the acceptance probability of
	a logical mixed state afflicted with a Clifford error,
	where $n$ is the physical qubit count.
	We then design flags that detect the malignant hook errors in cultivation,
	improving the pre-escape logical error rate
	from $\upTheta(p^3)$ to $\upTheta(p^5)$.
	At noise level $p =\num{1e-3}$,
	this is a 4.9$\times$ improvement,
	costing a 1.36$\times$ increase in attempts per accepted shot.
\end{abstract}

\maketitle
\begin{figure}[h]
	\centering
	\input{figures/from_yquant/yquant_circuits/2_fault_configuration.tex}
	\hfill
	\includegraphics[width=0.77\linewidth]{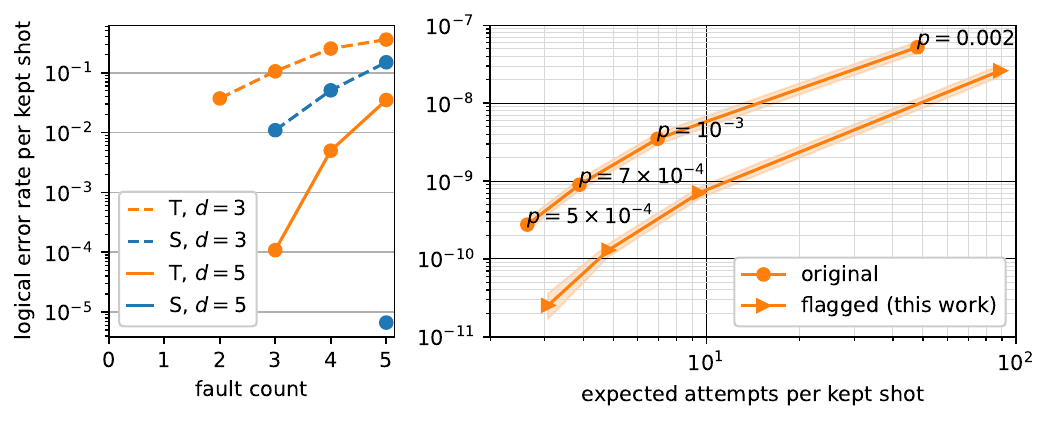}
	\begin{subfigure}[b]{0.24\linewidth}
		\centering
		\caption{In T-state cultivation,
		two Pauli error events can propagate into a weight-3 Clifford error
		that can lead to an undetected logical error.}
		\label{fig:2_fault_configuration}
	\end{subfigure}
	\hfill
	\begin{subfigure}[b]{0.27\linewidth}
		\centering
		\caption{The double check fault distance
		is 2 (3) for distance-3 (-5) T-state cultivation,
		rather than 3 (5) for its S-state proxy.
		Missing datapoints are exactly zero.}
		\label{fig:lep_against_fault_count}
	\end{subfigure}
	\hfill
	\begin{subfigure}[b]{0.43\linewidth}
		\centering
		\caption{In distance-5 T-state cultivation,
		detecting the degrading hook errors with our Z flags
		improves the simulated pre-escape logical error rate per kept shot.
		Shading shows the region within a factor of 1000 of the maximum likelihood.}
		\label{fig:d5_pre_escape_ler}
	\end{subfigure}
	\caption{Summary of results.
	Blue (orange) plot data correspond to cultivating S (T) states.}
	\label{fig:summary}
\end{figure}

\section{Introduction}
\label{sec:introduction}

Magic state cultivation (MSC)~\cite{Gidney2024},
or T-state cultivation,
is a relatively new protocol for producing logical T states
that has recently exploded in popularity
\cite{Chen2026,Claes2025,Sahay2026,Vaknin2025,Wan2025b,Hirano2025a,Rosenfeld2025,Hetenyi2026,Chen2026a}
due to its practicality:
it shrinks the spacetime cost of a logical T state
to that of a lattice surgery CX.
Simulating MSC on a classical computer is costly
due to its non-Clifford nature,
so early numerical benchmarks \cite{Xu2025,Ibe2025,Hines2026}
relied on simulating a Clifford proxy,
S-state cultivation,
on a stabiliser simulator such as Stim~\cite{Gidney2021a}.
Since the T- and S-state cultivation circuits share the same structure,
it was expected that their logical error rates would be similar.
However, simulations from the original paper~\cite{Gidney2024} showed that
distance-3 T-state cultivation in its early stages
has a (roughly 2$\times$) higher logical error rate
than its S-state analogue.
Subsequent papers \cite{Wan2026,Chen2026b} observed that
this T--S discrepancy widens as the noise level decreases.
More recent papers \cite{Li2025,Tuloup2026,Chase2026,Fang2026a}
simulate distance-5 cultivation
and observe an even larger (roughly 11$\times$) T--S discrepancy
that also widens as the noise level decreases;
this has yet to be explained.
This raises concerns for resource estimates that rely on MSC,
and highlights the danger of
using Clifford proxies to simulate
the error behaviour of non-Clifford protocols.

Motivated by this,
we analytically explain the widening T--S discrepancy
with hopes to improve cultivation in the future.
We adopt the approach taken by Daguerre and Kim~\cite[\S V]{Daguerre2025},
which is to consider the Clifford errors that result from
propagating Pauli errors through a layer of non-Clifford gates.
Then,
instead of Monte Carlo sampling,
we exhaustively enumerate the dominant contributions
to the logical error rate,
which directly tells us the \emph{fault distance}
i.e.\ level of logical-error suppression in the physical noise level.
It is known that the fault distance of a subcircuit,
called the \emph{double check},
within distance-3 (-5) S-state cultivation is 3 (5): as expected.
To our knowledge,
we are the first to show that it is instead 2 (3) in T-state cultivation;
see \cref{fig:lep_against_fault_count}.
This is due to the following reason:
the CXs spread Pauli errors which are then rotated
to multiqubit Clifford errors by the physical T-gates in the circuit;
\cref{fig:2_fault_configuration} shows an example.
Rather than merely changing the signs of the code stabilisers,
such an error transforms the code into a different stabiliser code
whose codespace can overlap the original,
allowing postselection to accept a logically erroneous component.
By contrast,
the corresponding errors remain Pauli in S-state cultivation
and are deterministically detected.

To analyse these Clifford errors,
we derive a formula (\cref{theorem:probability_trivial_syndrome})
for the probability to observe a trivial syndrome
for any mixed state encoded in any Pauli stabiliser code afflicted with any Clifford error.
We also provide an algorithm (\cref{alg:probability_trivial_syndrome})
to compute this probability in $\mathcal O(n^3)$ time
for fixed logical-qubit count,
where $n$ is the physical qubit count.
These two general results may be of independent interest.
We then calculate the fidelity (\cref{prop:resulting_fidelity})
specifically for T states afflicted with
the Clifford errors that are encountered in MSC,
conditioned on a trivial syndrome.

Having enumerated the malignant Clifford errors in MSC,
we use the technique of flags~\cite{Chao2018}
to design 2D-local Z-flagged versions of the double checks that detect those errors.
We then use the fast universal simulator SymFT~\cite{Fang2026a}
to simulate the whole MSC protocol except
the final so-called `escape' stage, which is purely Clifford.
Monte Carlo sampling shows these Z flags recover
the intended scaling of logical error rate with noise level,
which suggests hook Clifford errors
are the sole reason for the degraded fault distance of pre-escape MSC.
These sampling results are summarised in \cref{fig:d5_pre_escape_ler}
which shows the practical improvement
these Z flags bring to distance-5 T-state cultivation.
This demonstrates that the fault tolerance of non-Clifford circuits
can be improved by considering the propagation of non-Pauli errors,
even with the simple addition of more Clifford gates.

\Cref{sec:background} covers the prerequisite theory.
We first show the lower-than-expected fault distance in MSC
in \cref{sec:fault_distance_of_magic_state_cultivation},
then present our general Clifford-error analysis in \cref{sec:clifford_error_analysis}.
We quantitatively analyse the fault configurations in \cref{sec:thorough_analysis}
before evaluating our Z-flagged version of MSC in \cref{sec:z_flagged_double_checks}
and concluding in \cref{sec:conclusion}.
\Cref{sec:formal_definitions} provides definitions
required for the rest of the appendices
and \cref{sec:additional_figures} shows alternative figures to those in the main text.
The code for \cref{alg:probability_trivial_syndrome}
and for all the numerics in this paper
is on GitHub at \cite{Chan2026c_quantum_bibstyle}.
\section{Background}
\label{sec:background}
In this paper,
we will use the following notation.
An overline indicates
logical states and operator representatives
e.g.\ $\ket{\overline{0}}, \overline{Z}$.
Physical operators are without overline
and the standard ones are typed upright:
I, X, Y, Z,
$\S =\diag(1, \i)$,
$\T =\diag(1, \e^{\i \uppi /4})$,
CX.
We use the shorthand $\H_\pm :=\frac{1}{\sqrt{2}}(\X \pm \Y)$.
Sets are typed calligraphically e.g.\ $\mathcal{Q}$.
If clear from the context,
we refer to elements from the Pauli (Clifford) group
simply as Paulis (Cliffords).
Given a 1-qubit unitary $U$,
define $U_\mathcal{Q} :=\prod_{q \in \mathcal{Q}} U_q$
for integer set $\mathcal{Q} \subseteq `{1, \dots, n}$,
where $U_q$ denotes $U$ acting on qubit $q$
and $\I$ acting on the other qubits.
The \emph{support} of a multiqubit unitary
is the set of qubits on which it acts something other than $\I$,
e.g., the support of $U_\mathcal{Q}$ is $\mathcal{Q}$.
A \emph{stabiliser circuit} is a quantum circuit comprising
Clifford gates and 1-qubit Z preparations and measurements~\cite{Aaronson2004}.
The parameters of a quantum error correction code are written as
$\llbracket n, k, d \rrbracket$
for physical qubit count $n$,
logical qubit count $k$,
and distance $d$.
In the case distance is unimportant,
we write $\llbracket n, k\rrbracket$.
The logical action of any operator $U$
on a code may be realised by
multiple physically distinct representatives;
throughout, we use $\overline{U}$ to denote
a \emph{specific} representative.

The set
$\mathcal P_k:=\{\I,\X,\Y,\Z\}^{\otimes k}$
of phaseless $k$-qubit Paulis
is a basis for the space of
$2^k \times 2^k$ Hermitian matrices over $\mathbb R$.
So,
the density operator of any $k$-qubit state $\rho$
can be decomposed into this basis:
\begin{equation}
\rho =
2^{-k} \sum_{L \in \mathcal{P}_k} \alpha_L L,
\qquad \alpha_L \in \mathbb R,
\qquad \alpha_\I =1.
\label{eq:pauli_decomposition}
\end{equation}
Here $\alpha_\I$ is obtained
by ensuring $\tr \rho =1$,
and noting $\tr L =2^k$ if $L =\I$ else 0.

Cliffords map Paulis to Paulis.
Any Clifford can be defined up to a global phase
by how it maps the X and Z operators on each input, e.g.:
\begin{equation}\label{eq:h_pm_action}
	\H_\pm: \begin{cases}
		\X &\mapsto \pm \i \X\Z \\
		\Z &\mapsto -\Z.
	\end{cases}
\end{equation}

\subsection{The Hexagonal Colour Code}
Here we review the $\llbracket n, 1, d \rrbracket$ hexagonal colour code
for its relevance to MSC,
and for examples later on in this paper.
For this code,
choose the representatives
$\overline{X} =\X^{\otimes n}$ and $\overline{Z} =\Z^{\otimes n}$.
For the $\llbracket 7, 1, 3 \rrbracket$ instance in particular,
we label the data qubits according to \cref{fig:d3_layout}
and choose the following set $\mathcal G$ of six stabiliser generators:
\begin{equation}
(G_1,\ldots,G_6) =(
\mathrm{XXXIXII},
\mathrm{IXXXIXI},
\mathrm{IIXIXXX},
\mathrm{ZZZIZII},
\mathrm{IZZZIZI},
\mathrm{IIZIZZZ}).
\label{eq:d3_color_stabilizer_generators}
\end{equation}
\begin{figure}[b]
	\begin{subfigure}{0.48\linewidth}
		\centering
		\includegraphics[width=0.55\textwidth]{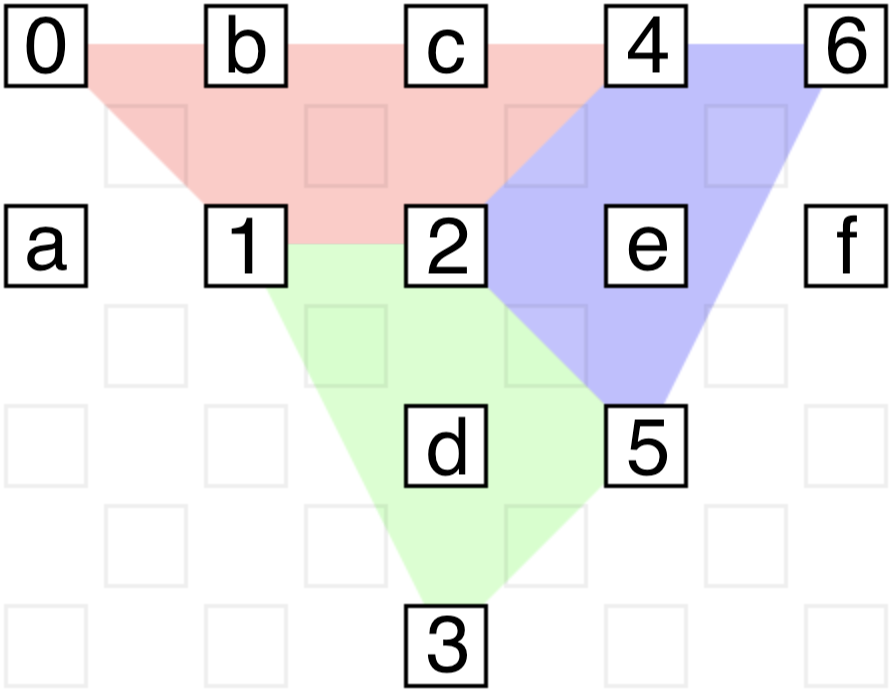}
		\caption{The physical qubit layout used in
		Stage 1 of the magic state cultivation protocol.
		Numbers 0--6 label the data qubits of
		the $\llbracket 7, 1, 3 \rrbracket$ colour code.
		Letters a--f label the ancilla qubits used for
		stabiliser and logical Clifford measurement.}
		\label{fig:d3_layout}
	\end{subfigure}
	\hfill
	\begin{subfigure}{0.48\linewidth}
		\centering
		\includegraphics[width=0.7\textwidth]{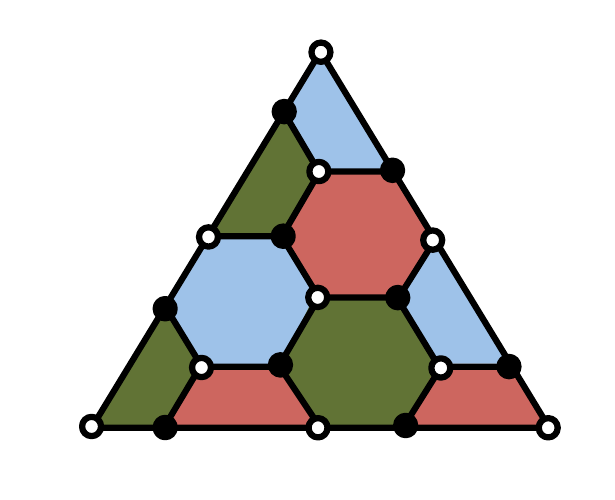}
		\caption{Representing the
		$\llbracket 19, 1, 5 \rrbracket$ colour code
		as a bipartite graph
		between light and dark nodes.
		Figure from \cite{Kubica2015}.}
		\label{fig:bipartite_graph}
	\end{subfigure}
	\caption{Distance-3 and -5 hexagonal colour codes.
	Each colour-shaded polygon
	represents an X- and a Z-type stabiliser generator,
	whose support is the set of qubits at its corners.}
\end{figure}
\noindent
We later use the following.
\begin{lem}\label{lem:logical_H_is_transversal}
On the $\llbracket n, 1, d \rrbracket$ hexagonal colour code,
logical $\H_\pm$ can be implemented transversally as follows.
Represent the physical qubits of the code as nodes,
and its plaquette edges as edges,
on a bipartite graph as the example in \cref{fig:bipartite_graph} shows.
Let $\mathcal{L}$ and $\mathcal{S}$ denote the two parts of the bipartition,
labelled such that $|\mathcal L|-|\mathcal S|\equiv 1 \pmod 4$;
such a labelling is always possible because $n$ is odd.
Then the implementation is
\begin{equation}
\overline{H_\pm} :=(\H_\pm)_\mathcal{L} (\H_\mp)_\mathcal{S}.
\label{eq:logical_H_is_transversal}
\end{equation}
\end{lem}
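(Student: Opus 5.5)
We will show that the operator $U:=(\H_\pm)_\mathcal{L}(\H_\mp)_\mathcal{S}$ does two things. First, it preserves the stabiliser group. Second, it acts on the chosen logical representatives $\overline X=\X^{\otimes n}$ and $\overline Z=\Z^{\otimes n}$ exactly as $\H_\pm$ does in \cref{eq:h_pm_action}, up to multiplication by stabilisers. Since a Clifford is fixed up to phase by its action on X and Z, this suffices.

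The key identity is $\H_\pm \Z \H_\pm^\dagger = -\Z$ and $\H_\pm \X \H_\pm^\dagger = \pm\i\X\Z = \mp\Y$. The second follows from $\i\X\Z=\Y$, so $\pm\i\X\Z=\pm\Y$; we will simply carry whatever sign convention \cref{eq:h_pm_action} fixes. Conjugating a Pauli by $\H_+$ or by $\H_-$ gives the same Pauli up to sign. Hence $U$ maps each Pauli string to a Pauli string of the same support, with X replaced by Y and Z kept as Z, times a sign.

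Stabiliser preservation comes first. A Z-type generator $\Z_P$ on plaquette $P$ maps to $(-1)^{|P|}\Z_P=\Z_P$, because every plaquette has even size (4 or 6). An X-type generator $\X_P$ maps to a sign times $\Y_P=\i^{|P|}\X_P\Z_P$. Because the graph is bipartite and each plaquette is an even cycle, $P$ contains $|P|/2$ qubits in $\mathcal L$ and $|P|/2$ in $\mathcal S$. The signs from $\H_\pm$ on $\mathcal L$ and $\H_\mp$ on $\mathcal S$ are opposite, contributing $(\pm1)^{|P|/2}(\mp1)^{|P|/2}=(-1)^{|P|/2}$. Together with $\i^{|P|}=(-1)^{|P|/2}$, the total sign is $+1$, so $\X_P\mapsto \X_P\Z_P$, which lies in the stabiliser group. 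This bookkeeping of signs per plaquette is the step I expect to need the most care, in particular matching each half of the bipartition to the sign of the corresponding local $\H$.

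For the logical action, $\overline Z\mapsto(-1)^n\Z^{\otimes n}=-\overline Z$ because $n$ is odd, matching $\Z\mapsto-\Z$. For $\overline X$, the image is $(\pm1)^{|\mathcal L|}(\mp1)^{|\mathcal S|}\Y^{\otimes n}$. Rewriting $\Y^{\otimes n}=\i^n\X^{\otimes n}\Z^{\otimes n}$, the sign becomes $\pm^{|\mathcal L|-|\mathcal S|}(-1)^{|\mathcal S|}\i^{n}$, up to the convention, which should equal $\pm\i$.

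We then use $n=|\mathcal L|+|\mathcal S|$ and $|\mathcal L|-|\mathcal S|\equiv1\pmod 4$. These give $\i^n=\i^{|\mathcal L|-|\mathcal S|}\,\i^{2|\mathcal S|}=\i\,(-1)^{|\mathcal S|}$, so the $(-1)^{|\mathcal S|}$ factors cancel. The remaining factor is $(\pm1)^{|\mathcal L|+|\mathcal S|}$ multiplied by the sign of $\mp$ on $\mathcal S$; since $|\mathcal L|-|\mathcal S|$ is odd, it reduces to $\pm$. The result is $\pm\i\overline X\,\overline Z$, as required.

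Finally, we justify that such a labelling exists. Both parts of any hexagonal-colour-code bipartition have sizes differing by an odd number, since $n$ is odd. If that difference is $3\pmod 4$, we swap the names $\mathcal L$ and $\mathcal S$, which turns it into $1\pmod 4$.
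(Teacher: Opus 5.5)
Your proof is correct and follows the paper's argument essentially step for step. You show stabiliser preservation from each even plaquette splitting equally between $\mathcal L$ and $\mathcal S$, so that $\mathrm{X}^{\otimes w}\mapsto\mathrm{X}^{\otimes w}\mathrm{Z}^{\otimes w}$ and $\mathrm{Z}^{\otimes w}\mapsto\mathrm{Z}^{\otimes w}$; you then get $\mathrm{X}^{\otimes n}\mapsto\pm(-1)^{|\mathcal S|}\mathrm{i}^n\mathrm{X}^{\otimes n}\mathrm{Z}^{\otimes n}=\pm\mathrm{i}\,\mathrm{X}^{\otimes n}\mathrm{Z}^{\otimes n}$ via $|\mathcal L|-|\mathcal S|\equiv1\pmod 4$, and your swap argument for the labelling fills in a point the paper only asserts.

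One slip to fix: your key identity should read $\mathrm{H}_\pm\mathrm{X}\mathrm{H}_\pm^\dagger=\pm\mathrm{i}\mathrm{X}\mathrm{Z}=\pm\mathrm{Y}$, not $\mp\mathrm{Y}$. You derive $\pm\mathrm{Y}$ in the next clause and use it correctly afterwards, but this sign is exactly what distinguishes $\overline{H_+}$ from $\overline{H_-}$, so the hedge ``up to the convention'' should be dropped.
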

\begin{proof}
	It suffices to show $\overline{H_\pm}$ preserves the stabiliser group
	and has the desired logical action.
	The code comprises $w$-sided plaquettes where $w \in `{4, 6}$,
	each corresponding to a weight-$w$ stabiliser generator
	supported on an equal number of qubits from $\mathcal{L}$ and $\mathcal{S}$.
	Given this and \cref{eq:h_pm_action},
	$\overline{H_\pm}$ transforms the stabiliser generators as
	$\X^{\otimes w} \mapsto \X^{\otimes w} \Z^{\otimes w}$
	or $\Z^{\otimes w} \mapsto \Z^{\otimes w}$,
	hence preserves the stabiliser group.
	Also,
	$\overline{H_\pm}$ maps the logical X and Z observables in the desired way:
	$\X^{\otimes n}
	\mapsto\pm (-1)^{|\mathcal S|} \i^n \X^{\otimes n} \Z^{\otimes n}
	=\pm \i \X^{\otimes n} \Z^{\otimes n}$
	and $\Z^{\otimes n} \mapsto -\Z^{\otimes n}$.
\end{proof}
\noindent
This bipartition-based construction
was originally used for $\overline{S}$
\cite[\S II.C]{Kubica2015}
(hence provides a proof shortcut using $\overline{H_+} =\e^{-\i \uppi/4} \overline{SX}$),
but note other valid allocations of $\H_+, \H_-$ exist
\cite[Lemma~3]{Tansuwannont2025}.
For the rest of this paper,
$\overline{H_+}$ for hexagonal colour codes will refer to
the implementation in \cref{eq:logical_H_is_transversal}.

\subsection{Fault Distance}\label{sec:fault_distance}
In quantum error correction we naturally deal with
noisy quantum circuits that propagate and manipulate logical information.
These circuits usually contain measurements
that enable the construction of detectors:
a \emph{detector} is a set of Pauli measurements
whose parity is deterministic in the absence of noise.
For a given noisy circuit run,
the \emph{syndrome} is a binary vector $\v s$ whose support
indicates the detectors whose parities differ from expected.
We call $\v s =\v 0$ a \emph{trivial} syndrome.
We define an \emph{error event} as an unintended event
that occurs independently (from other error events)
at a specific location in the circuit
with a probability proportional to
some characteristic noise level $p$ of the system
e.g.\ a Pauli X after an S gate.
A \emph{hook error} is an error event that,
due to a physical entangling gate in the circuit,
spreads to eventually affect more than one data qubit.

Often, different error events
	flip the same set of detectors
	and have the same resultant Pauli effect on the data qubits
		when propagated to the end of the stabiliser circuit.
Whenever this occurs,
we group these equivalent error events together
into the same \emph{fault},
defined abstractly by that detector signature and resultant effect
(we give a proper definition in \cref{sec:faults}).
Then, instead of modelling noise as a large set of error events,
we can model it more efficiently as a smaller set of independent faults.
For those familiar with the \emph{detector error model} formalism~\cite{Gidney2021a}:
a fault is similar to an error mechanism
but carries slightly more information;
we define it this way on purpose to allow
subsequent propagation of the resultant effect through non-Cliffords.
A \emph{$k$-fault configuration} is a set of $k$ faults occurring together;
its \emph{fault count} is its cardinality, $k$.

\emph{Fault distance},
also known as \emph{circuit distance},
generalises the notion of code distance
(a property of the code) to circuits.
It is the minimum number of error events
(equivalently, faults)
needed to flip a logical observable
without flipping any detectors \cite[\S 3.1]{Gidney2024}.
For a subcircuit with quantum inputs,
we define its fault distance assuming error-free inputs
and count only error events occurring within the subcircuit.
A high fault distance is desirable because
as $p \to 0$, and conditioned on a trivial syndrome,
the logical error rate of a circuit of fault distance $d$ is $\upTheta(p^d)$.

\subsection{Magic State Cultivation}
\label{sec:magic_state_cultivation}

\begin{figure}
	\centering
	\begin{tikzpicture}
\begin{yquant}[/yquant/operator/minimum width=0mm]

qubit {} d[3];
nobit a[3];
nobit f;

init {$\ket{\overline{T}}$} (d);
init {$\ket{+}$} a[0];
init {$\ket{0}$} a[1,2];

align -;
cnot a[1]|a[0];
cnot a[2]|a[0];
init {$\ket{0}$} f;
cnot f|a[1];
align -;
cnot f|a[2];
dmeter {Z} f;
discard f;

box {$\H_\pm$} d[0]|a[0];
[operator/separation=-2.5mm]
box {$\H_\pm$} d[1]|a[1];
[operator/separation=-2.5mm]
box {$\H_\pm$} d[2]|a[2];

dmeter {X} a;
discard a;

barrier (-);

init {$\ket{+}$} a[0];
init {$\ket{0}$} a[1,2];

cnot a[1]|a[0];
cnot a[2]|a[0];

box {$\H_\pm$} d[0]|a[0];
[operator/separation=-2.5mm]
box {$\H_\pm$} d[1]|a[1];
[operator/separation=-2.5mm]
box {$\H_\pm$} d[2]|a[2];

cnot a[2]|a[0];
cnot a[1]|a[0];

dmeter {X} a[0];
dmeter {Z} a[1,2];
discard a;

barrier (-);

init {$\ket{+}$} a;

box {$\H_\pm$} d[0]|a[0];
[operator/separation=-2.5mm]
box {$\H_\pm$} d[1]|a[1];
[operator/separation=-2.5mm]
box {$\H_\pm$} d[2]|a[2];

cnot a[2]|a[0];
cnot a[1]|a[0];

dmeter {X} a[0];
discard a[0];
init {$\ket{+}$} a[0];

cnot a[1]|a[0];
cnot a[2]|a[0];

box {$\H_\pm$} d[2]|a[2];
[operator/separation=-2.5mm]
box {$\H_\pm$} d[1]|a[1];
[operator/separation=-2.5mm]
box {$\H_\pm$} d[0]|a[0];

dmeter {X} a;
discard a;

\end{yquant}
\end{tikzpicture}
	\begin{subfigure}{0.25\linewidth}
		\caption{The \emph{short single check}, used in
		Itogawa et al.~\cite{Itogawa2025}.
		This involves preparing a fault-tolerant GHZ state
		and measuring it transversally;
		the $\overline{H_+}$ measurement is the
		parity of these X measurements.}
		\label{fig:short_single_check_sketch}
	\end{subfigure}
	\hfill
	\begin{subfigure}{0.24\linewidth}
		\caption{The \emph{long single check}, used in
		Sahay et al.~\cite{Sahay2026}.
		This involves encoding a GHZ state,
		unencoding it,
		and measuring it;
		the $\overline{H_+}$ measurement is the
		single physical X measurement.}
		\label{fig:long_single_check_sketch}
	\end{subfigure}
	\hfill
	\begin{subfigure}{0.45\linewidth}
		\caption{The \emph{double check}, used in
		Gidney et al.~\cite{Gidney2024}.
		This is conceptually (a) preceded by its time reverse,
		and constitutes two $\overline{H_+}$ measurements:
		the first is the X measurement midway;
		the second is the parity of the X measurements at the end.}
		\label{fig:double_check_sketch}
	\end{subfigure}
	\caption{A sketch of three different methods to measure
	$\overline{H_+}$ from \cref{lem:logical_H_is_transversal}.
	The top three qubits represent the $n >3$ data qubits of the colour code.
	The size of the GHZ state in each method is flexible~\cite[\S B.3.b]{Sahay2026}.}
	\label{fig:check_options}
\end{figure}
There are two variants of MSC:
MSC-3 and MSC-5,
named after the distance of the largest colour code used.
Both variants comprise three stages:
injection, cultivation, and escape.

We first describe MSC-3.
The magic state $\ket{\T} \propto \ket{0} +\e^{\i \uppi/4} \ket{1}$
is the state stabilised by the Clifford $\H_+$,
so $\ket{\overline{T}}$ is stabilised by $\overline{H_+}$.
\begin{description}
	\item[Stage 1] (injection) prepares a $\ket{\overline{T}}$ state
	in the $\llbracket 7, 1, 3 \rrbracket$ colour code
	using a unitary circuit of fault distance 1,
	followed by one stabiliser measurement round.
	\item[Stage 2] (cultivation) measures the observable $\overline{H_+}$ twice;
	out of the possible methods outlined in \cref{fig:check_options} for this,
	MSC uses the \emph{double check} in \cref{fig:double_check_sketch}
	as it is the most amenable to 2D nearest-neighbour connectivity.
	Its explicit form for MSC-3,
	which we call the \emph{distance-3 double check},
	is shown in \cref{fig:d3a6};
	this relies on the decompositions
	$\H_+ =\T\X\T^\dag$ and
	$\H_- =\T^\dag \X\T$,
	and has two further optimisations:
	$\T^\dag$ cancels $\T$ between the two applications of controlled-$\H_\pm$,
	and the GHZ-state (un)encoding is morphed onto some data qubits
	for connectivity reasons.
	Note the circuit in the original paper~\cite[Figure~7]{Gidney2024} mistakenly
	measures $\H_+^{\otimes 7}$ which is logically equivalent to $\overline{H_-}$.
	\item[Stage 3] (escape) converts the colour code into a much larger surface code.
	Throughout the whole protocol up until this point:
	any detector flip causes the attempt to be aborted.
	Otherwise, after conversion,
	some number of stabiliser measurement rounds are performed;
	the syndrome is given to a decoder and the state is accepted
	if the confidence of the decoder exceeds some preset threshold.
\end{description}
MSC-3 is summarised in \cref{fig:stage_flowchart},
and is intended to have fault distance 3 before its escape stage.
\begin{SCfigure}
	\includegraphics{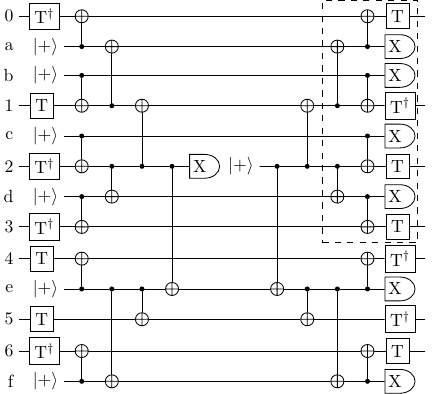}
	\caption{The double-check circuit of distance-3 T-state cultivation.
	The qubits are labelled according to \cref{fig:d3_layout}.
	Each X measurement corresponds to a detector.
	The dashed box is the region shown in \cref{fig:2_fault_configuration}.
	For S-state cultivation, replace T with S;
	click
	\href{https://algassert.com/crumble\#circuit=Q(0,0)0;Q(0,1)1;Q(1,0)2;Q(1,1)3;Q(2,0)4;Q(2,1)5;Q(2,2)6;Q(2,3)7;Q(3,0)8;Q(3,1)9;Q(3,2)10;Q(4,0)11;Q(4,1)12;POLYGON(0,0,1,0.25)8_11_10_5;POLYGON(0,1,0,0.25)7_10_5_3;POLYGON(1,0,0,0.25)3_5_8_0;TICK;MPP_Y0*Y8*Y11*Y10*Y7*Y5*Y3;TICK;MPP_X8*X0*X3*X5;TICK;MPP_X3*X5*X10*X7;TICK;MPP_X11*X8*X5*X10;TICK;MPP_Z8*Z0*Z3*Z5;TICK;MPP_Z7*Z10*Z5*Z3;TICK;MPP_Z11*Z8*Z5*Z10;TICK;RX_12_9_4_2_6_1;S_3_8_10;S_DAG_0_5_7_11;TICK;CX_1_0_9_8_6_7_4_5_2_3_12_11;TICK;CX_3_1_5_6_9_12;TICK;CX_5_3_9_10;TICK;CX_5_9;TICK;MX_5;DT(0,0,0)rec[-1]_rec[-8];TICK;RX_5;TICK;CX_5_9;TICK;CX_5_3_9_10;MARKX(0)3;TICK;CX_3_1_5_6_9_12;TICK;CX_1_0_9_8_6_7_12_11_4_5_2_3;MARKY(0)7;TICK;MX_12_9_4_2_6_1;S_5_11_7_0;S_DAG_3_8_10;DT(4,1,1)rec[-6];DT(3,1,1)rec[-5];DT(2,1,1)rec[-4]_rec[-7];DT(1,0,1)rec[-3];DT(2,2,1)rec[-2];DT(0,1,1)rec[-1];TICK;MPP_X8*X0*X3*X5;DT(3,0,2)rec[-1]_rec[-8]_rec[-14];TICK;MPP_X7*X10*X5*X3;DT(1,1,3)rec[-1]_rec[-9]_rec[-14];TICK;MPP_X11*X8*X5*X10;DT(4,0,4)rec[-1]_rec[-10]_rec[-14];TICK;MPP_Z8*Z0*Z3*Z5;DT(3,0,5)rec[-1]_rec[-14];TICK;MPP_Z7*Z10*Z5*Z3;DT(2,3,6)rec[-1]_rec[-14];TICK;MPP_Z11*Z8*Z5*Z10;DT(4,0,7)rec[-1]_rec[-14];TICK;MPP_Y0*Y8*Y11*Y10*Y7*Y5*Y3;OI(0)rec[-1]_rec[-8]_rec[-9]_rec[-12]_rec[-13]}{this Crumble link}
	to see that circuit with the error events from \cref{fig:2_fault_configuration}.}
	\label{fig:d3a6}
\end{SCfigure}

\begin{figure}[H]
	\centering
	\begin{tikzpicture}[
		stage/.style={
			draw=black,
			text centered,
			text depth=.25ex,
			text height=1.5ex
		},
		arrow/.style={-{Stealth[length=2mm]}},
	]
	% nodes
	\node[stage] (prepare) {prepare $\llbracket 7, 1, 3 \rrbracket$};
	\node[stage] (stabilise) [right=of prepare] {stabilise};
	\node[stage] (double_check) [right=of stabilise] {double check};
	\node[stage] (escape) [right=of double_check] {escape};
	\node[stage, yshift=3mm] (grow) [below=of $(stabilise)!0.5!(double_check)$] {grow distance by 2};
	% lines
	\draw[arrow] (prepare.east) -- (stabilise.west);
	\draw[arrow] (stabilise.east) -- (double_check.west);
	\draw[arrow] (double_check.east) -- (escape.west);
	\draw[arrow,dashed] (double_check.south) -- ([xshift=8mm]grow.north);
	\draw[arrow,dashed] ([xshift=-8mm]grow.north) -- (stabilise.south);
	\end{tikzpicture}
	\caption{Summary of magic state cultivation.
	Distance-5 cultivation traverses the dashed path once.}
	\label{fig:stage_flowchart}
\end{figure}

MSC-5,
also summarised in \cref{fig:stage_flowchart},
is the same as MSC-3 through to the distance-3 double check,
after which the $\llbracket 7, 1, 3 \rrbracket$ colour code
is grown into a $\llbracket 19, 1, 5\rrbracket$ colour code.
Three stabiliser measurement rounds are performed,
then $\overline{H_+}$ is again measured twice
using the distance-5 analogue of \cref{fig:d3a6};
we call this the \emph{distance-5 double check}.
Stage 3 is the same as for MSC-3---with
larger initial and final codes.
MSC-5 is intended to have fault distance 4 before its escape stage;
this increases to 5 if four instead of three stabiliser rounds
are performed before the distance-5 double check~\cite[\S 2.3]{Gidney2024} \cite[\S 4]{Hetenyi2026}.
Gidney et al.~\cite{Gidney2024} chose three instead of four rounds
because the spacetime cost of adding another round is not worth
the small improvement in logical error rate at practically relevant noise levels.

The Clifford analogue of T-state cultivation
is S-state cultivation,
whose circuit can be obtained by replacing all
T ($\T^\dag$) gates with S ($\S^\dag$).
As \cref{sec:introduction} mentions,
multiple papers~\cite{Gidney2024,Li2025,Tuloup2026,Chase2026,Fang2026a,Chen2026b}
report an unexplained difference in
the pre-escape logical error rate per kept shot
between T- and S-state cultivation,
which increases as the noise level decreases.
\section{Motivating Example: Degraded Fault Distance in Cultivation}
\label{sec:fault_distance_of_magic_state_cultivation}
The reason for the widening T--S discrepancy is that for S-state cultivation,
the double checks have the fault distance they were designed for.
For T-state cultivation,
the final distance-3 (-5) double check
instead has fault distance 2 (3);
this means that the logical error rate per kept shot
scales as $p^2$ ($p^3$) instead of $p^3$ ($p^5$), for noise level $p$.
For the rest of this paper we thus focus on the final double check.
The following shows an explicit example of
two error events that can lead to a logical error
in distance-3 T-state cultivation.

\begin{eg}\label{eg:d3_color_code_acceptance_probability}
Consider two 1-qubit error events, each occurring with probability $\upTheta(p)$;
namely, an $\X_1$ and a $\Y_3$ error event
at the locations shown in \cref{fig:2_fault_configuration}.
The $\X_1$ is a hook error:
it propagates through two CXs to become $\X_{0 \a 1}$,
but ancilla qubit $\a$ is measured in the X-basis
so the X error on it will have no effect.
This leaves us with $\X_{0 1} \Y_3$ immediately before the T-gates,
which will go undetected past this circuit.
Note that the data qubits $(0, 1, 3)$ support a logical Pauli;
we will omit their subscripts in what follows.
Pushing any unitary $U$ through a gate $G$ transforms $U$ into $GUG^\dag$;
now consider pushing $\X\X\Y$ through the given T-gates:
\begin{align}
	E
	&:=
	(\T \T^\dag \T)
	(\X \X \Y)
	(\T \T^\dag \T)^\dag \notag \\
	&=-(\H_+)_0 (\H_-)_{1 3}.
	\label{eq:clifford_error_in_superposition}
\end{align}
The theory we develop in the next section,
together with \cref{lem:insert_projector},
shows that
the Clifford error $E$ flips no subsequent detectors
with probability of 1/4 and,
if so,
leads to an undetected logical error.
An intuitive explanation is:
$E$ completely randomises two X-type stabiliser generators of the code,
so the probability of both returning to their +1 eigenvalue is $(1/2)^2$.
As for the logical information,
$E$ anticommutes with the stabiliser $\overline{H_+}$
that defines the logical state,
so it rotates the logical state to the orthogonal one.
\end{eg}

To contrast,
we briefly show why the error $\X_{0 1} \Y_3$
never leads to logical error in the analogous circuit for S-state cultivation.
\begin{eg}
\label{eg:d3_color_code_s_state}
In S-state cultivation,
the T-gates are replaced by S-gates.
Consider pushing $\X\X\Y$ through these S-gates:
\begin{equation}
	(\S\S^\dag \S) (\X\X\Y)(\S\S^\dag \S)^\dag
	=\Y_{0 1} \X_3.
\end{equation}
This Pauli error anticommutes with $G_2$
from the set of stabiliser generators
\cref{eq:d3_color_stabilizer_generators};
hence, it is deterministically detected.
\end{eg}

The next section is motivated by
the question posed at the end of
\cref{eg:d3_color_code_acceptance_probability}
(\textit{how often does the Clifford error evade detection
and end up as an accepted logical error?});
unfortunately however,
the final double check is followed immediately by the escape stage,
whose details are quite involved.
We avoid propagating the Clifford error $E$ through this stage
using our lemma in \cref{sec:reducing_code_conversion_to_codespace_projection},
which allows us to insert
an imaginary noiseless stabiliser measurement round beforehand
without affecting the post-escape state.
We can then analyse the effect of $E$ on that imaginary round instead,
which is the topic of the next section.
Then, in \cref{sec:thorough_analysis},
we resume our analysis of MSC
by enumerating the low-weight logical errors
in MSC-3 and -5.
\section{Clifford-Error Analysis}
\label{sec:clifford_error_analysis}

In this section,
we consider in a general context how physical Clifford errors
affect logical states.
Specifically,
we consider afflicting a physical Clifford error on an encoded state,
then noiselessly measuring the stabiliser generators
(which we assume to be Paulis)
of the code.
Since the error in general is not Pauli,
these noiseless measurements in general are nondeterministic.
We may get a trivial syndrome ($\v s =\v 0$),
in which case the afflicted state is projected back into the codespace.
We are interested in two quantities:
(1) the probability to observe this trivial syndrome,
and (2) the fidelity of the resulting state given a trivial syndrome.

Conceptually,
our analysis uses that since Cliffords preserve commutation,
the Clifford error will transform the code
into a slightly different stabiliser code.
Calculating (1) involves counting
how many new stabilisers overlap with the old ones;
the time complexity of this is
polynomial in physical qubit count.
As for (2),
the result depends on the overlap between
the original projector defining the logical state
and the transformed projector.
\Cref{sec:acceptance_probability} answers (1) generally,
and \cref{sec:resulting_fidelity} answers (2)
in the case of T-state encoded in the colour code.

\subsection{Acceptance Probability}
\label{sec:acceptance_probability}
Here we present a general expression for
the probability $\pr(\v s =\v 0)$ to observe a trivial syndrome.
\begin{theorem}
\label{theorem:probability_trivial_syndrome}
Consider an arbitrary $k$-qubit mixed state $\rho$
encoded in an $\llbracket n, k\rrbracket$ Pauli stabiliser code,
which then suffers a physical Clifford error $E$.
Let $\mathcal G$ be a set of $n-k$ independent stabiliser
generators, which we measure noiselessly after the error.
The probability of a trivial syndrome is
\begin{equation}
\label{eq:probability_trivial_syndrome}
\pr(\v s =\v 0)
=\begin{cases*}
0, & if $`<E^\dagger \mathcal G E>
\cap \big(-`<\mathcal G>\big)
\neq \varnothing$, \\
2^{-s} \sum_{L} \omega_L \alpha_L,
& else,
\end{cases*}
\end{equation}
where the coefficients $\{\alpha_L\}_L$ define $\rho$ by
\cref{eq:pauli_decomposition},
and $s :=\rk`< \mathcal G\cup E^\dagger\mathcal G E > -(n -k)$ is
the number of new constraints imposed by the transformed stabilisers.
The sum is over all $L\in\mathcal P_k$ whose logical representative satisfies
\begin{equation}
\label{eq:logical_membership}
\overline L
\in \omega_L
`<E^\dagger \mathcal G E>
\langle\mathcal G\rangle
\end{equation}
for some sign $\omega_L\in\{\pm1\}$;
in the nonzero case, this sign is unique.
\end{theorem}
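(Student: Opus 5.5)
Writing $P_{\mathcal G} := \prod_{G\in\mathcal G}\tfrac12(\I+G) = 2^{-(n-k)}\sum_{S\in\langle\mathcal G\rangle} S$ for the codespace projector, the encoded state is $\overline\rho = 2^{-k}\sum_L \alpha_L \overline L\, P_{\mathcal G}$. After the error it becomes $E\overline\rho E^\dagger$. The probability of a trivial syndrome from noiseless measurement of $\mathcal G$ is then
\[
\pr(\v s=\v 0)=\tr\big(P_{\mathcal G}E\overline\rho E^\dagger\big)=\tr\big(E^\dagger P_{\mathcal G}E\,\overline\rho\big)=2^{-k}\sum_L\alpha_L\tr\big(P'\,\overline L\, P_{\mathcal G}\big),
\]
where $P' := E^\dagger P_{\mathcal G}E = 2^{-(n-k)}\sum_{S'\in\langle E^\dagger\mathcal GE\rangle}S'$. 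Because $E$ is Clifford, $E^\dagger\mathcal G E$ is again a set of $n-k$ independent commuting signed Paulis, so $P'$ is a stabiliser projector. The whole computation therefore reduces to evaluating the trace of a product of Paulis drawn from two stabiliser groups and a logical operator.

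Expanding both projectors gives
\[
\tr(P'\overline LP_{\mathcal G}) = 2^{-2(n-k)}\sum_{S'\in\langle E^\dagger\mathcal GE\rangle}\sum_{S\in\langle\mathcal G\rangle}\tr(S'\overline L S).
\]
A term is nonzero only if $S'\overline LS\propto\I$, in which case it equals $2^n$ times the sign. I would organise the sum by first using $\overline L P_{\mathcal G}=P_{\mathcal G}\overline L P_{\mathcal G}$, so that one may write $\tr(P'\overline LP_{\mathcal G})=\tr(P_{\mathcal G}P'P_{\mathcal G}\overline L)$.

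The key structural fact is that $P_{\mathcal G}P'P_{\mathcal G}$ equals $P_{\mathcal G}$ times the average over the group $H := \langle E^\dagger\mathcal GE\rangle\cap(\pm\langle\mathcal G\rangle\cdot\text{logicals})$, i.e.\ those elements of $H$ commuting with all of $\mathcal G$. Elements $S'$ that anticommute with some $G$ are killed, since $P_{\mathcal G}S'P_{\mathcal G}=0$.

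\textbf{Case 1.} If some $S'$ lies in $-\langle\mathcal G\rangle$, then $P'P_{\mathcal G}=0$ (the two stabiliser groups contain $S'$ and $-S'$), so the probability is $0$, which is the first case of \cref{eq:probability_trivial_syndrome}.

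\textbf{Case 2.} Otherwise, the commuting part $H_c$ of $\langle E^\dagger\mathcal GE\rangle$ maps, modulo $\langle\mathcal G\rangle$, injectively to signed logical Paulis. Now count. The group $\langle E^\dagger\mathcal GE\rangle$ has $2^{n-k}$ elements, and its intersection with $\pm\langle\mathcal G\rangle\cdot\mathcal P$-logicals has order $2^{n-k-s}$. Here I would use $\rk\langle\mathcal G\cup E^\dagger\mathcal GE\rangle = (n-k)+s$, which identifies $s$ as the number of generators of $E^\dagger\mathcal GE$ independent from $\mathcal G$. I expect this counting to be the main obstacle: one must relate $s$ precisely to the size of the commuting subgroup $H_c$. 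I would try a symplectic argument. The elements of $\langle E^\dagger \mathcal GE\rangle$ anticommuting with something in $\mathcal G$ are detected by the commutation map into $\mathbb F_2^{n-k}$, and the logicals live in the normaliser. I would need to check that $s$ counts exactly the kernel-complement, possibly arguing via the dimension of $\langle\mathcal G\rangle+\langle E^\dagger\mathcal GE\rangle$ restricted to the normaliser.

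Given the count, the remaining evaluation is as follows. Each $S'\in H_c$ with $S'\in\omega_L\overline L\langle\mathcal G\rangle$ contributes $\omega_L\tr(\overline L\,\overline L P_{\mathcal G}) = \omega_L 2^{k}$. Uniqueness of the sign $\omega_L$ holds because two signs would put $-\I$ into $\langle E^\dagger\mathcal GE\rangle\langle\mathcal G\rangle$, which contradicts Case 2. Collecting the factors $2^{-k}$, $2^{-(n-k)}$ and $2^k$, together with the multiplicity $|H_c\cap\overline L\langle\mathcal G\rangle|$, gives the prefactor $2^{-s}$ once the multiplicities are fixed by the counting step. This yields $2^{-s}\sum_L\omega_L\alpha_L$.
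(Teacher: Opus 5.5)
\paragraph{Gap in the counting step.}
Your setup, the vanishing case, and the sign-uniqueness argument match the paper's proof. The counting step, which you flag as the main obstacle and leave open, is a genuine gap, and the route you propose for it targets the wrong quantity.

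You plan to show that the commuting part $H_c$ of $\mathcal B:=\langle E^\dagger\mathcal G E\rangle$ has order $2^{n-k-s}$, and to read the multiplicities off that. This is false in general. Write $\mathcal A:=\langle\mathcal G\rangle$. Then $H_c$ is the disjoint union, over all $L$ satisfying \cref{eq:logical_membership}, of the sets $\mathcal B\cap\omega_L\overline L\mathcal A$. So $|H_c|$ is $2^{n-k-s}$ \emph{times the number of contributing logicals}.

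A counterexample: take the $\llbracket 2,1\rrbracket$ code with $\mathcal G=\{\Z_1\}$, $\overline Z=\Z_2$, and $E=\mathrm{SWAP}$. Then $\mathcal B=\{\I,\Z_2\}$ and $s=1$. Both $L=\I$ and $L=Z$ contribute, giving the correct acceptance probability $(1+\alpha_Z)/2$. Yet $|H_c|=2\neq 2^{n-k-s}=1$.

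Your claim that $H_c$ maps injectively modulo $\mathcal A$ also fails whenever $\mathcal A\cap\mathcal B$ is nontrivial. In the paper's running example, $E$ fixes $G_3,\dots,G_6$, and all sixteen elements of $\langle G_3,\dots,G_6\rangle$ map to the identity logical. So a symplectic argument through the commutation map and $|H_c|$ cannot produce the factor $2^{-s}$ once a non-identity $L$ contributes.

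\paragraph{The missing idea.}
What you need is the per-$L$ multiplicity $|\mathcal B\cap\omega_L\overline L\mathcal A|$, and it is the same for every contributing $L$. Suppose $b_L,b\in\mathcal B\cap\omega_L\overline L\mathcal A$. The $\overline L$ factors cancel in $bb_L$, because $\overline L$ commutes with $\mathcal A$ and squares to $\I$. Hence $bb_L\in\mathcal K:=\mathcal A\cap\mathcal B$. Conversely, $b_L\mathcal K$ lies in that set. So the set is a coset of $\mathcal K$ and has $2^{\rk\mathcal K}$ elements, which gives your prefactor $2^{-(n-k)+\rk\mathcal K}$.

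The identity $s=(n-k)-\rk\mathcal K$ then follows from
\[
\dim(V_{\mathcal A}+V_{\mathcal B})=\dim V_{\mathcal A}+\dim V_{\mathcal B}-\dim(V_{\mathcal A}\cap V_{\mathcal B})
\]
for the binary symplectic subspaces $V_{\mathcal A},V_{\mathcal B}\subseteq\mathbb F_2^{2n}$. The non-vanishing hypothesis $\mathcal B\cap(-\mathcal A)=\varnothing$ guarantees that every Pauli common to both groups up to sign has the same sign in each. It therefore lies in $\mathcal K$, so $\dim(V_{\mathcal A}\cap V_{\mathcal B})=\rk\mathcal K$.

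This is how the paper finishes, directly from the double sum $\sum_{a,b}\tr(ba\overline L)$, without the compression $P_{\mathcal G}P'P_{\mathcal G}$. Incidentally, you state that compression's normalisation loosely: it equals $2^{-(n-k)}\sum_{S'\in H_c}S'P_{\mathcal G}$, not $P_{\mathcal G}$ times an average over $H_c$.
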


% \blue{
% \begin{theorem}
% Consider an arbitrary $k$-qubit mixed state $\rho$
% encoded in an $\llbracket n,k\rrbracket$ Pauli stabiliser code,
% which then suffers a physical Clifford error $E$.
% Let $\mathcal G$ be a set of $n-k$ independent stabiliser
% generators, which we measure noiselessly after the error.
% The probability of a trivial syndrome is
% \begin{equation}
%     \label{eq:probability_trivial_syndrome_zero}
%     \pr(\v s=\v 0)=0,
%     \qquad \text{if}\quad
%     \langle E^\dagger\mathcal G E\rangle
%     \cap \bigl(-\langle\mathcal G\rangle\bigr)
%     \neq \varnothing.
% \end{equation}

% Otherwise, let $\mathcal L_E\subseteq\mathcal P_k$ contain
% precisely those logical Paulis $L$ for which there exist
% a physical Pauli representative $\overline L$ of $L$
% and a sign $\omega_L\in\{+1,-1\}$ satisfying
% \begin{equation}
%     \label{eq:logical_membership}
%     \omega_L\overline L
%     \in \langle E^\dagger\mathcal G E\rangle.
% \end{equation}
% Then
% \begin{equation}
%     \label{eq:probability_trivial_syndrome}
%     \pr(\v s=\v 0)
%     =
%     2^{-s}
%     \sum_{L\in\mathcal L_E}\omega_L\alpha_L,
% \end{equation}
% where $\{\alpha_L\}_L$ are the Pauli coefficients of $\rho$
% defined in \cref{eq:pauli_decomposition}, and
% \[
%     s :=
%     \rk\langle\mathcal G\cup E^\dagger\mathcal G E\rangle
%     -(n-k)
% \]
% is the number of additional independent Pauli generators
% contributed by the transformed checks beyond the original
% stabiliser group.
% \end{theorem}
% Note that for each $L$ satisfying \cref{eq:logical_membership}, the sign $\omega_L$ is unique.
% }

\noindent
In words:
if the transformed stabiliser group contains
the negative of an original stabiliser,
the probability vanishes.
Otherwise,
a given logical Pauli component contributes to the probability if
any representative is a transformed stabiliser up to a sign.
We can interpret $2^{-s}$ as the acceptance probability for a maximally
mixed logical input, while the sum accounts for dependence on the
logical state.
We prove this statement in \cref{sec:clifford_error_analysis_proofs}
and present an $\mathcal O(n^3)$-time algorithm
to evaluate \cref{eq:probability_trivial_syndrome}
in \cref{sec:acceptance_probability_algorithm}.
In most cases considered in this paper,
the only logical that satisfies \cref{eq:logical_membership}
is identity.
Other logicals can contribute however,
if e.g.\ $E$ maps a stabiliser directly to that logical.

In the special case $E$ is a Pauli then either:
	$E$ flips at least one stabiliser in $\mathcal G$,
	in which case $\pr(\v s =\v 0) =0$,
	or $E^\dag \mathcal G E =\mathcal G$,
	in which case $s =0$ and only $L =\I$ satisfies
	\cref{eq:logical_membership} with $\omega_\I =1$,
	so $\pr(\v s =\v 0) =1$.

\begin{eg*}
To recap,
we have a T-state,
i.e.\ the state stabilised by $\H_+$,
encoded in the $\llbracket 7, 1, 3 \rrbracket$ hexagonal colour code,
which then suffers the Clifford error
$E =-(\H_+)_0 (\H_-)_{1 3}$.
$E$ transforms only $G_1$ and $G_2$ out of $\mathcal G$;
this can be seen from the support of $E$
and by using \cref{eq:h_pm_action,eq:d3_color_stabilizer_generators}.
Since $E=E^\dagger$,
\begin{equation}
(E^\dagger G_1E,\ldots,E^\dagger G_6E)=(
-\mathrm{YYXIXII},
\mathrm{IYXYIXI},
G_3,G_4,G_5,G_6).
\end{equation}
One checks that
$\langle E^\dagger \mathcal G E\rangle
\cap
\bigl(-\langle\mathcal G\rangle\bigr)=\varnothing$
and
$s=\rk\langle\mathcal G\cup E^\dagger\mathcal G E\rangle-n +k
=8-7+1=2$.
Also,
\begin{equation}
\proj{\T}
=\frac12\left(\I+\frac{\X+\Y}{\sqrt2}\right).
\end{equation}
Among the nonzero Pauli-basis coefficients,
only $L=\I$ satisfies \cref{eq:logical_membership},
with $\omega_\I=1$.
Therefore \cref{theorem:probability_trivial_syndrome}
gives $\pr(\v s=\v0) =2^{-s} \omega_\I \alpha_\I =1/4$.
\end{eg*}
\noindent
As an aside,
we recover the following result
that was implied by Aaronson and Gottesman~\cite[p~5]{Aaronson2004}.
\begin{cor}
The fidelity between two pure stabiliser states
whose Pauli stabiliser groups are
$\mathcal A$ and $\mathcal B$ respectively,
is $0$ if $\mathcal B \cap
(-\mathcal A)
\neq \varnothing$,
else $2^{-s}$
where $s := \rk\langle \mathcal A\cup \mathcal B \rangle -n$.
\end{cor}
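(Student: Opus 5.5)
We obtain the corollary as the special case $k=0$ of \cref{theorem:probability_trivial_syndrome}. Let $\ket{\psi_A}$ and $\ket{\psi_B}$ be the two pure stabiliser states, with stabiliser groups $\mathcal A$ and $\mathcal B$. Take the code to be the $\llbracket n,0\rrbracket$ stabiliser code whose stabiliser group is $\mathcal A$. Its unique codestate is $\ket{\psi_A}$, so the encoded "mixed state" $\rho$ is trivial: there is only one logical Pauli, $L=\I$, with $\alpha_\I=1$.

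Since Cliffords act transitively on pure stabiliser states, we can choose a Clifford $E$ with $E\ket{\psi_B}=\ket{\psi_A}$. Then $E^\dagger\mathcal A E=\mathcal B$, with signs included, because for $g\in\mathcal A$ we have $E^\dagger g E\ket{\psi_B}=E^\dagger g\ket{\psi_A}=\ket{\psi_B}$. The error $E$ maps $\ket{\psi_A}$ to $E\ket{\psi_A}$. Noiselessly measuring generators $\mathcal G$ of $\mathcal A$ and obtaining a trivial syndrome is the projection onto $\ket{\psi_A}$, so
\[
\pr(\v s=\v0)=|\braket{\psi_A|E|\psi_A}|^2 .
\]
This does not yet equal the fidelity we want, $|\braket{\psi_A|\psi_B}|^2$, so the choice of $E$ must be adjusted. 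The cleaner route is to let $E$ be a Clifford with $E^\dagger\mathcal A E=\mathcal B$, that is, $E^\dagger\ket{\psi_A}\propto\ket{\psi_B}$. In the theorem's convention the transformed stabilisers are $E^\dagger\mathcal G E$, so this $E$ is exactly the right one. We must then check that the theorem's $\pr(\v s=\v0)$, computed for the state "$\rho$ afflicted by $E$", equals $|\braket{\psi_A|E^\dagger|\psi_A}|^2=|\braket{\psi_B|\psi_A}|^2$. Whether the afflicted state is $E\rho E^\dagger$ or $E^\dagger\rho E$ is fixed by matching the convention in the theorem's proof. The modulus-squared overlap is symmetric under $E\leftrightarrow E^\dagger$ anyway, because $|\braket{\psi_A|E|\psi_A}|=|\braket{\psi_A|E^\dagger|\psi_A}|$. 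Either way we get $\pr(\v s=\v0)=|\braket{\psi_A|\psi_B}|^2$, which is the fidelity.

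Now read off the theorem. Its zero condition $\langle E^\dagger\mathcal G E\rangle\cap(-\langle\mathcal G\rangle)\neq\varnothing$ becomes $\mathcal B\cap(-\mathcal A)\neq\varnothing$. The quantity $s=\rk\langle\mathcal G\cup E^\dagger\mathcal G E\rangle-(n-0)$ becomes $\rk\langle\mathcal A\cup\mathcal B\rangle-n$. In the nonzero case the sum has only the term $L=\I$. Its representative $\overline{\I}=\I$ lies in $\langle E^\dagger\mathcal G E\rangle\langle\mathcal G\rangle$ with sign $\omega_\I=+1$, and that sign is unique, so the sum equals $\alpha_\I=1$. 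The fidelity is therefore $2^{-s}$.

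The only real obstacle is the convention check. We need the probability the theorem computes to be exactly $|\braket{\psi_A|\psi_B}|^2$ for a Clifford $E$ with $E^\dagger\mathcal A E=\mathcal B$, including signs. This rests on two facts. First, the Clifford group acts transitively on pure stabiliser states, so such an $E$ exists. Second, for $k=0$ the trivial-syndrome projector is exactly $\ket{\psi_A}\!\bra{\psi_A}$. Both are standard.
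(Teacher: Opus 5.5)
Your proposal is correct and follows the paper's own proof. You specialise \cref{theorem:probability_trivial_syndrome} to $k=0$ with $\mathcal A=\langle\mathcal G\rangle$ and $\mathcal B=\langle E^\dagger\mathcal G E\rangle$, so only $L=\I$ with $\alpha_\I=\omega_\I=1$ survives, and the two branches give $0$ and $2^{-s}$. Your extra justification (transitivity of the Clifford group, and $\upPi_{\mathcal G}=\proj{\psi_A}$) spells out what the paper states in a parenthetical.

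The detour is unnecessary, though. Your first choice $E\ket{\psi_B}=\ket{\psi_A}$ is the same $E$ as your ``cleaner route''. It already gives $|\langle\psi_A|E|\psi_A\rangle|^2=|\langle\psi_B|\psi_A\rangle|^2$, contrary to your remark that it does not yet equal the fidelity.
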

\begin{proof}
Take $k =0$
in \cref{theorem:probability_trivial_syndrome},
and let $\mathcal A =`<\mathcal G>$
and $\mathcal B =`<E^\dag \mathcal G E>$
(i.e.\ the physical state stabilised by $\mathcal B$
is that of $\mathcal A$,
transformed by the Clifford $E^\dag$);
then the fidelity is $\pr(\v s =\v 0)$.
The density operator of the 0-qubit state is unity
i.e.\ $\alpha_1 =1$,
so \cref{eq:logical_membership}
reduces to $\overline{1} =\I \in \omega_1 \mathcal B \mathcal A$.
If $\omega_1 =-1$ then we are in the first branch of
\cref{eq:probability_trivial_syndrome} and the fidelity vanishes.
Else,
we are in the second branch and
$\sum_L \omega_L \alpha_L =\omega_1 \alpha_1 =1$.
\end{proof}

\subsection{Resulting Fidelity}
\label{sec:resulting_fidelity}
To answer the second question
(\textit{given a trivial syndrome,
what is the fidelity of the resulting state?}),
we consider the more specific scenario,
which we also prove in \cref{sec:clifford_error_analysis_proofs}:

\begin{prop}\label{prop:resulting_fidelity}
Consider a T-state encoded in an $\llbracket n, 1, d \rrbracket$
hexagonal colour code,
which then suffers a tensor product $E$ of Clifford errors from $`{\I, \H_\pm, \Z}$
up to global phase.
The operators $\H_+$, $\H_-$, and $\Z$
square to $\I$ and pairwise anticommute,
so $E$ either commutes or anticommutes with
$\overline{H_+}$ defined in \cref{eq:logical_H_is_transversal}.
If we noiselessly measure a minimal set $\mathcal G$ of stabiliser generators
and observe a trivial syndrome,
the fidelity of the resulting state is
one (zero) if $E$ commutes (anticommutes) with
$\overline{H_+}$;
determining this takes $\mathcal O(n)$ time.
\end{prop}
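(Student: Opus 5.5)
The plan is to avoid computing any overlaps explicitly. Instead, we show that the post-selected state is an eigenstate of $\overline{H_+}$ within the codespace, and the eigenvalue is fixed by whether $E$ commutes with $\overline{H_+}$.

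Let $\Pi$ be the codespace projector, which is also the projector onto the trivial syndrome of $\mathcal G$, and let $\ket{\overline{T}}$ be the encoded T-state. Conditioned on $\v s =\v 0$, the resulting state is $\Pi E\ket{\overline{T}}/\lVert \Pi E \ket{\overline{T}}\rVert$. This is well defined whenever the trivial syndrome has nonzero probability, which is the only case where the conditional fidelity is meaningful; global phases of $E$ are irrelevant.

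The first step is the elementary commutation claim in the statement. The matrices $\H_+$, $\H_-$ and $\Z$ are Hermitian, square to $\I$, and pairwise anticommute; this is a direct $2\times 2$ check using $\H_\pm =(\X\pm\Y)/\sqrt2$. On each qubit $q$, $\overline{H_+}$ acts as $\H_+$ or $\H_-$ according to whether $q\in\mathcal L$ or $q\in\mathcal S$ (\cref{lem:logical_H_is_transversal}). The factor $E_q$ therefore commutes with $\overline{H_+}$ restricted to $q$ if $E_q$ is $\I$ or the same $\H_\pm$, and anticommutes otherwise. Hence $E\,\overline{H_+} =(-1)^m\,\overline{H_+}E$, where $m$ counts the anticommuting sites. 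Computing $m$ is a single pass over the $n$ qubits, which gives the $\mathcal O(n)$ claim.

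The second step, which I expect to be the main point needing care, is that $\overline{H_+}$ commutes exactly with $\Pi$. By the proof of \cref{lem:logical_H_is_transversal}, conjugation by $\overline{H_+}$ maps each generator to an element of $\langle\mathcal G\rangle$ with the correct sign. For example, the $X$-type plaquette $\X^{\otimes w}$ goes to the operator $\X^{\otimes w}\Z^{\otimes w}$, which is literally the product of the $X$- and $Z$-type generators of that plaquette. Here I will double-check that the phases $(\pm\i)$ from $\mathcal L$ and $\mathcal S$ cancel in pairs because each plaquette has equally many qubits in each part. It follows that $\overline{H_+}\Pi\overline{H_+}^\dagger=\Pi$.

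Combining the two steps gives
\begin{equation*}
\overline{H_+}\,\Pi E\ket{\overline{T}}
=\Pi\,\overline{H_+}E\ket{\overline{T}}
=(-1)^m\,\Pi E\,\overline{H_+}\ket{\overline{T}}
=(-1)^m\,\Pi E\ket{\overline{T}} .
\end{equation*}
So the accepted state lies in the codespace and is a $(-1)^m$-eigenvector of $\overline{H_+}$. Restricted to the codespace, $\overline{H_+}$ acts as the single-qubit logical $\H_+$. Its $+1$ eigenspace is spanned by $\ket{\overline{T}}$, and its $-1$ eigenspace by the orthogonal encoded state. The fidelity with $\ket{\overline{T}}$ is therefore $1$ if $m$ is even (commuting) and $0$ if $m$ is odd (anticommuting).
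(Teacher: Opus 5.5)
Your proof is correct, but it takes a different route from the paper's.

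\textbf{The paper's route.} The paper works with the unnormalised overlap $|\langle\overline{T}|\Omega\rangle|^2\Pr(\v s=\v 0)=\tr(E^\dagger\proj{\overline{T}}E\proj{\overline{T}})$. It substitutes $\proj{\overline{T}}=\Pi_{\mathcal G}\tfrac12(\mathrm{I}+\overline{H_+})$ and conjugates through $E$. This yields the product $\tfrac12(\mathrm{I}+(-1)^a\overline{H_+})\cdot\tfrac12(\mathrm{I}+\overline{H_+})$, multiplied by $\Pi_{E^\dagger\mathcal G E}\Pi_{\mathcal G}$. In the anticommuting case these orthogonal projectors annihilate. In the commuting case, idempotency leaves exactly the trace expression \cref{eq:sum_of_traces_fully_expanded} from the proof of \cref{theorem:probability_trivial_syndrome}, which is $\Pr(\v s=\v 0)$ itself, so the fidelity is one.

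\textbf{Your route.} You show directly that the post-selected vector $\Pi E\ket{\overline{T}}$ is a $(-1)^m$-eigenvector of $\overline{H_+}$. You then use that $\overline{H_+}$ has one-dimensional $\pm1$ eigenspaces inside the two-dimensional codespace.

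\textbf{What each buys.} Your argument is shorter and self-contained: it needs neither the Pauli expansion nor \cref{theorem:probability_trivial_syndrome}. It also makes explicit, with the per-plaquette phase check, the fact $\overline{H_+}\Pi_{\mathcal G}\overline{H_+}^\dagger=\Pi_{\mathcal G}$. The paper uses this only tacitly, inherited from \cref{lem:logical_H_is_transversal}, when it treats $\Pi_{\mathcal G}\tfrac12(\mathrm{I}+\overline{H_+})$ as a projector and moves $\Pi_{\mathcal G}$ past $\overline{H_+}$.

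The paper's route instead puts the fidelity numerator in the same trace form as the acceptance probability. That is what the authors have in mind when they suggest generalising \cref{prop:resulting_fidelity} to arbitrary Clifford errors and mixed logical states. Your eigenvector argument depends essentially on $E$ exactly commuting or anticommuting with a stabiliser of a pure logical state, and on the codespace eigenspaces of that stabiliser being nondegenerate.
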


\begin{eg*}
For $E =-(\H_+)_0(\H_-)_{1 3}$,
the only tensor factor that anticommutes with the corresponding
factor of $\overline{H_+}=(\H_+)_{0 2 3 6}(\H_-)_{1 4 5}$
is at qubit 3.
Thus $E$ anticommutes with $\overline{H_+}$, so the fidelity is zero.
\end{eg*}

Lastly,
we note there is an alternative analysis
(to \cref{theorem:probability_trivial_syndrome,prop:resulting_fidelity})
that treats the Clifford error as a superposition of Paulis
instead of focusing on how it transforms the stabilisers.
This recovers the same results
and is potentially more intuitive,
but is slower:
computing $\pr(\v s =\v 0)$ in $\mathcal O(2^n)$ time.
In \cref{sec:alternative_clifford_error_analysis}
we walk through \cref{eg:d3_color_code_acceptance_probability}
using this alternative analysis.
\begin{SCfigure}[][b]
	\includegraphics[width=0.5\linewidth]{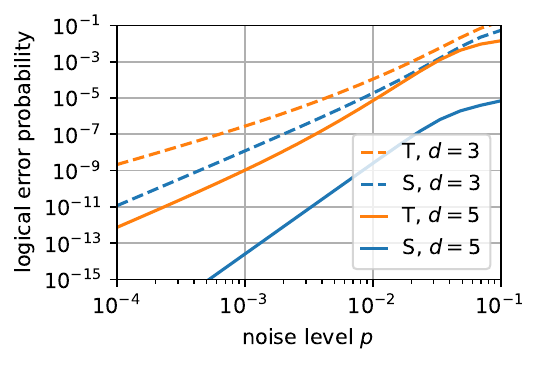}
	\caption{The logical error rate per kept shot of the double-check circuit,
	plotted against noise level $p$.
	Dashed (solid) lines correspond to distance-3 (-5) cultivation,
	and blue (orange) data correspond to cultivating S (T) states.
	These curves are analytic and correct through fifth order in $p$.
	From top to bottom, the gradient of the lines as $p \to 0$ are: 2, 3, 3, 5.
	This figure and \cref{fig:lep_against_fault_count} are derived from the same enumeration data.}
	\label{fig:ler_against_per}
\end{SCfigure}

\section{Fault Enumeration of Magic State Cultivation}\label{sec:thorough_analysis}
We resume our focus on MSC,
combining \cref{lem:insert_projector,sec:clifford_error_analysis}
to find fault configurations in the final double check
with some probability of being undetected;
that is,
not flipping any detectors during and after the double check.
We use the same uniform depolarising noise model as
in the original MSC paper~\cite{Gidney2024},
which is summarised in \cref{sec:noise_model}.
We enumerate all such configurations of \num{\le 5} faults
for both distances 3 and 5;
the latter took 15 seconds on a 2021 Apple MacBook Pro M1.
Compared to Monte Carlo sampling,
enumeration gives us more insight
and allows us to analytically reconstruct
an approximation of the logical error rate not at one noise level,
but as a function of the noise level.
We need only compute this function once
before we can instantaneously probe arbitrarily low logical error rates
and generate plots such as \cref{fig:ler_against_per}.
We note this approach is similar to SyQMA~\cite{Umbrarescu2026},
which can compute the exact symbolic logical error rate
(and thus fault distance) of non-Clifford circuits,
but does so by simulating the noisy state rather than enumerating fault configurations.

We say an undetected fault configuration is
benign (malignant)
if it leads to a logical fidelity of 1 (0).
For MSC-3,
there are 4 malignant 2-fault configurations;
these are explicitly shown in \cref{fig:all_2_fault_configurations}.
\begin{SCfigure}
	\input{figures/from_yquant/yquant_circuits/all_2_fault_configurations.tex}
	\caption{All four 2-fault configurations
	for the distance-3 double-check circuit (\cref{fig:d3a6})
	that can lead to logical error.
	Each configuration is shown as a pair of error events
	of a given shape and colour
	e.g.\ the blue circled X and Y realise
	the configuration in \cref{fig:2_fault_configuration}.
	Note each configuration can be realised by
	multiple equivalent pairs of error events;
	here we choose to show the earliest possible events.
	We also highlight how each X event propagates until the layer of T-gates.}
	\label{fig:all_2_fault_configurations}
\end{SCfigure}
Note these configurations degrade the fault distance of MSC-3 only:
they are caught by the subsequent distance-5 double check in MSC-5.
For MSC-5,
there are 3 (601) malignant 3- (4-) fault configurations
in its final, distance-5 double check;
each configuration contains at least one X hook error.
\Cref{fig:fcc_against_fault_count} shows
the complete data on configuration counts,
from which \cref{fig:lep_against_fault_count} is derived.
In \cref{sec:logical_error_rate_per_kept_shot} we describe how to
analytically construct the logical error rate per kept shot
from these fault enumerations,
the result of which is shown in \cref{fig:ler_against_per}.
The T--S logical error rate discrepancy
observed for both code distances in this subfigure
can be explained by \cref{fig:lep_against_fault_count}.
The first reason is the difference in fault distance
(indicated by the first nonzero point for each line);
this dictates the asymptotic scaling
of the logical error rate as the noise level $p \to 0$.
The second reason is that T-state cultivation has
a higher fraction of malignant configurations
compared to S-state cultivation;
this manifests as a higher constant factor
multiplying the logical error rate.

This degraded fault distance is experienced also by
the distance-3 double check used in
the experimental realisation of MSC~\cite{Rosenfeld2025}
(Crumble link \href{https://algassert.com/crumble\#circuit=Q(0,1)0;Q(0,2)1;Q(0.5,0.5)2;Q(0.5,1.5)3;Q(1,0)4;Q(1,1)5;Q(1,2)6;Q(1.5,0.5)7;Q(1.5,1.5)8;Q(2,1)9;Q(2,2)10;POLYGON(0,0,1,0.25)0_4_9_5;POLYGON(0,1,0,0.25)0_5_6_1;POLYGON(1,0,0,0.25)5_9_10_6;TICK;MPP_Y0*Y4*Y5*Y9*Y10*Y6*Y1;TICK;MPP_X5*X9*X10*X6;TICK;MPP_X0*X5*X6*X1;TICK;MPP_X4*X9*X5*X0;TICK;MPP_Z5*Z9*Z10*Z6;TICK;MPP_Z0*Z5*Z6*Z1;TICK;MPP_Z4*Z9*Z5*Z0;TICK;RX_8_2_7_3;TICK;S_0_9_6;S_DAG_4_5_1_10;TICK;CX_3_0_7_4_2_5_8_10;TICK;CX_5_7_8_9_3_1;TICK;CX_5_3_8_6;TICK;CX_8_5;TICK;MX_8;DT(0,1,0)rec[-1]_rec[-8];TICK;RX_8;TICK;CX_8_5;TICK;CX_5_3_8_6;TICK;CX_5_7_8_9_3_1;TICK;CX_3_0_7_4_2_5_8_10;TICK;S_4_5_1_10;S_DAG_0_9_6;TICK;MX_8_2_7_3;DT(1.5,1.5,1)rec[-4]_rec[-5];DT(0.5,0.5,1)rec[-3];DT(1.5,0.5,1)rec[-2];DT(0.5,1.5,1)rec[-1];TICK;MPP_X5*X9*X10*X6;DT(1,1,2)rec[-1]_rec[-12];TICK;MPP_X0*X5*X6*X1;DT(0,1,3)rec[-1]_rec[-12];TICK;MPP_X4*X9*X5*X0;DT(1,0,4)rec[-1]_rec[-12];TICK;MPP_Z5*Z9*Z10*Z6;DT(1,1,5)rec[-1]_rec[-12];TICK;MPP_Z0*Z5*Z6*Z1;DT(0,1,6)rec[-1]_rec[-12];TICK;MPP_Z4*Z9*Z5*Z0;DT(1,0,7)rec[-1]_rec[-12];TICK;MPP_Y0*Y4*Y5*Y9*Y10*Y6*Y1;OI(0)rec[-1]_rec[-8]_rec[-9]_rec[-11]}{here}):
applying the same analysis to this slightly smaller circuit,
we find it also has fault distance 2.

\section{Z-Flagged Double Checks}\label{sec:z_flagged_double_checks}

So how can the intended fault distance be restored?
Note the distance-3 double check uses 6 ancillas
to interact with the 7 data qubits
so there is necessarily a pair of CXs
from the same ancilla to different data qubits (e to 5, and e to 6 via f).
One may therefore think using 7 ancillas will restore fault distance;
however,
none of the X hook errors in \cref{fig:all_2_fault_configurations}
are due to this CX pair.
Varying ancilla count will not restore fault distance.

A solution which works is to replace the double check
with two short or long single checks,
sketched in \cref{fig:short_single_check_sketch,fig:long_single_check_sketch}
respectively.
We find
using the analysis of the previous section
that these $\overline{H_+}$-measurement methods
do not have degraded fault distance,
at least for code distance 3;
specifically, we analyse the distance-3 short single check
(Crumble link \href{https://algassert.com/crumble\#circuit=Q(0,0)0;Q(0,1)1;Q(0,2)2;Q(0,3)3;Q(0,4)4;Q(1,0)5;Q(1,1)6;Q(1,3)7;Q(1,4)8;Q(2,0)9;Q(2,1)10;Q(2,2)11;Q(2,3)12;Q(2,4)13;Q(3,0)14;Q(3,1)15;Q(3,2)16;Q(3,3)17;Q(3,4)18;Q(4,0)19;Q(4,1)20;Q(4,2)21;Q(4,3)22;Q(4,4)23;MPP_Y6*Y10*Y15*Y11*Y17*Y12*Y7;POLYGON(0,0,1,0.25)7_6_11_12;POLYGON(0,1,0,0.25)10_12_17;POLYGON(1,0,0,0.25)6_11_15;TICK;MPP_X6*X11*X12*X7;TICK;MPP_X6*X10*X15*X11;TICK;MPP_X11*X10*X17*X12;TICK;MPP_Z6*Z11*Z12*Z7;TICK;MPP_Z6*Z10*Z15*Z11;TICK;MPP_Z11*Z10*Z17*Z12;TICK;R_23;RX_22;TICK;CX_22_23;R_21_18;TICK;CX_22_21_23_18;R_13_20;TICK;CX_18_13_21_20;MX_22_23;R_8_19;TICK;CX_13_8_20_19;R_4_14;TICK;CX_8_4_19_14;MX_20;R_9_3;TICK;CX_14_9_4_3;MX_8_19;R_5_2;TICK;CX_3_2_9_5;MX_4;R_0_1_16;TICK;CX_5_0_2_1_21_16;TICK;CX_1_0;MX_5_2_21;S_6_10_12;S_DAG_7_11_17_15;TICK;CX_1_6_3_7_13_12_18_17_9_10_14_15_16_11;M_0;DT(0,0,0)rec[-1];TICK;MX_1_9_14_16_3_13_18;S_11_7_17_15;S_DAG_10_6_12;DT(1,1,1)rec[-1]_rec[-2]_rec[-3]_rec[-4]_rec[-5]_rec[-6]_rec[-7]_rec[-9]_rec[-10]_rec[-11]_rec[-12]_rec[-13]_rec[-14]_rec[-15]_rec[-16]_rec[-17]_rec[-24];TICK;MPP_X6*X11*X12*X7;DT(1,1,2)rec[-1]_rec[-24];TICK;MPP_X6*X10*X15*X11;DT(1,1,3)rec[-1]_rec[-24];TICK;MPP_X11*X10*X17*X12;DT(2,2,4)rec[-1]_rec[-11]_rec[-24];TICK;MPP_Z6*Z11*Z12*Z7;DT(1,1,5)rec[-1]_rec[-24];TICK;MPP_Z6*Z10*Z15*Z11;DT(1,1,6)rec[-1]_rec[-24];TICK;MPP_Z11*Z10*Z17*Z12;DT(2,2,7)rec[-1]_rec[-14]_rec[-24];TICK;MPP_Y6*Y10*Y15*Y11*Y17*Y12*Y7;OI(0)rec[-1]_rec[-8]_rec[-9]_rec[-10]_rec[-11]_rec[-12]_rec[-13]_rec[-14]_rec[-16]_rec[-17]_rec[-18]_rec[-19]_rec[-20]_rec[-21]_rec[-22]_rec[-23]_rec[-24]}{here})
from the predecessor of MSC, Itogawa et al.~\cite{Itogawa2025},
and a minimally modified long single-check version
(Crumble link \href{https://algassert.com/crumble\#circuit=Q(0,0)0;Q(0,1)1;Q(0,3)2;Q(0,4)3;Q(1,0)4;Q(1,1)5;Q(1,2)6;Q(1,3)7;Q(1,4)8;Q(2,0)9;Q(2,1)10;Q(2,2)11;Q(2,3)12;Q(2,4)13;Q(3,0)14;Q(3,1)15;Q(3,2)16;Q(3,3)17;Q(3,4)18;MPP_Y1*Y5*Y10*Y6*Y12*Y7*Y2;POLYGON(0,0,1,0.25)2_1_6_7;POLYGON(0,1,0,0.25)5_7_12;POLYGON(1,0,0,0.25)1_6_10;DT(0,1,0)rec[-1];TICK;MPP_X1*X6*X7*X2;TICK;MPP_X1*X5*X10*X6;TICK;MPP_X6*X5*X12*X7;TICK;MPP_Z1*Z6*Z7*Z2;TICK;MPP_Z1*Z5*Z10*Z6;TICK;MPP_Z6*Z5*Z12*Z7;TICK;R_17;RX_16;TICK;CX_16_17;R_18_15;TICK;CX_17_18_16_15;R_14_13_11;TICK;CX_15_14_18_13_16_11;R_8_9;TICK;CX_14_9_13_8;R_4_3;TICK;CX_8_3_9_4;R_0;TICK;CX_4_0;S_1_5_7;S_DAG_2_6_12_10;TICK;CX_8_7_13_12_4_5_9_10_11_6_0_1_3_2;TICK;CX_4_0;S_6_2_12_10;S_DAG_5_1_7;TICK;CX_8_3_9_4;M_0;DT(0,0,1)rec[-1];TICK;CX_14_9_13_8;M_4_3;DT(1,0,2)rec[-2];DT(0,4,2)rec[-1];TICK;CX_15_14_18_13_16_11;M_9_8;DT(2,0,3)rec[-2];DT(1,4,3)rec[-1];TICK;CX_17_18_16_15;M_14_11_13;DT(3,0,4)rec[-3];DT(2,2,4)rec[-2];DT(2,4,4)rec[-1];TICK;CX_16_17;M_15_18;DT(3,1,5)rec[-2];DT(3,4,5)rec[-1];TICK;M_17;MX_16;DT(3,3,6)rec[-2];DT(0,1,6)rec[-1]_rec[-19];TICK;MPP_X1*X6*X7*X2;DT(0,1,7)rec[-1]_rec[-19];TICK;MPP_X1*X5*X10*X6;DT(0,1,8)rec[-1]_rec[-19];TICK;MPP_X6*X5*X12*X7;DT(1,2,9)rec[-1]_rec[-19];TICK;MPP_Z1*Z6*Z7*Z2;DT(0,1,10)rec[-1]_rec[-19];TICK;MPP_Z1*Z5*Z10*Z6;DT(0,1,11)rec[-1]_rec[-19];TICK;MPP_Z6*Z5*Z12*Z7;DT(1,2,12)rec[-1]_rec[-19];TICK;MPP_Y1*Y5*Y10*Y6*Y12*Y7*Y2;OI(0)rec[-1]_rec[-8]}{here}).
Sahay et al.~\cite[\S B.3.c]{Sahay2026} use the latter technique,
exploiting 2D-nonlocal connectivity
to pipeline the (un)encoding of the GHZ states
for its two consecutive long single checks.
Under 2D-local connectivity this pipelining is unavailable,
so it is desirable to keep the double check
especially since it uses nearly half the spacetime volume
per logical measurement compared to the long single check.

We therefore offer an alternative solution,
\cref{fig:d3a6f2,fig:d5a19f13},
which keeps the double check but adds Z flags
to detect all low-fault-count malignant configurations
(see \cref{fig:d3a6f2_conventional_diagram} for
the conventional circuit diagram of \cref{fig:d3a6f2}).
Our Z flags are 2D local in that
	each interacts only with a pair of neighbouring qubits,
but they are not 2D \emph{nearest-neighbour} since
	the existing interactions become next-nearest-neighbour.
We therefore do not intend for them to be physically implemented verbatim,
but rather we treat them as a proof of concept.
In \cref{sec:pre_escape_performance} we show
Monte Carlo simulation results with and without these flags,
and in \cref{sec:flag_design_process}
we briefly comment on our flag design process.

\begin{figure}
	\centering
	\begin{tikzpicture}
	\node[anchor=south west, inner sep=0] (image)
		{\includegraphics[width=0.9\textwidth]{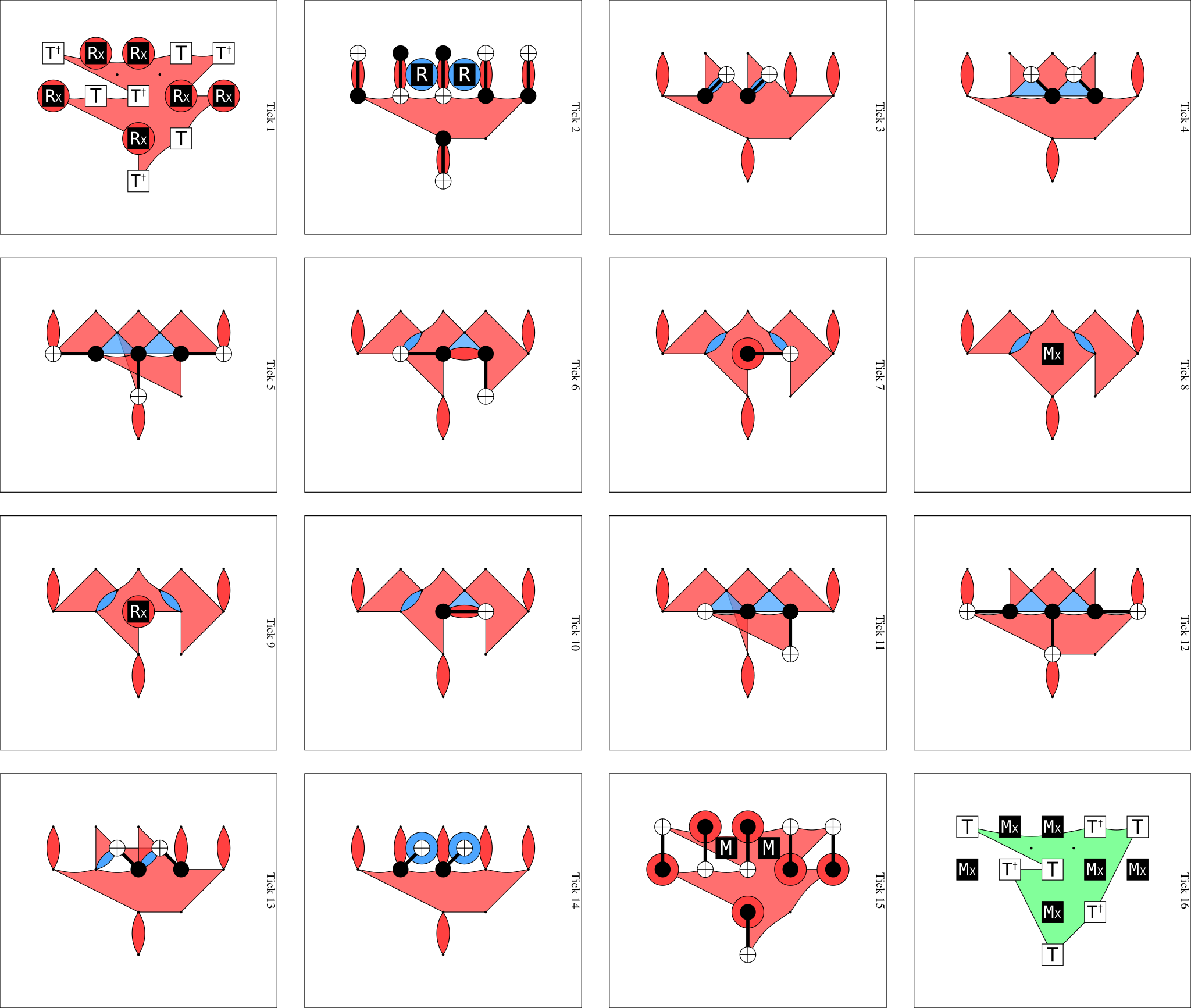}};
	\begin{scope}[x={(image.south east)}, y={(image.north west)}]
		\foreach \x/\y in {0.511/0.766, 0.767/0.766, 0.00/0.001, 0.256/0.001}
		\draw[draw=zblue, line width=1.5pt]
			(\x,\y) rectangle ++(0.234,0.233);
	\end{scope}
	\end{tikzpicture}
	\caption{A Z-flagged version of the distance-3 double-check circuit
	that detects the four fault configurations
	shown in \cref{fig:all_2_fault_configurations}.
	The two added flag qubits are those prepared in $\ket{0}$ in Tick 2,
	and the four additional CX layers are boxed in blue.
	Red, green, and blue shapes represent
	$\X$, $\H_\pm :=\frac{1}{\sqrt{2}}(\X \pm \Y)$, and $\Z$
	\emph{detecting regions} (defined in \cref{sec:reducing_code_conversion_to_codespace_projection}),
	respectively.
	Click \href{https://algassert.com/crumble\#circuit=Q(0,0)0;Q(0,1)1;Q(1,0)2;Q(1,1)3;Q(1.5,0.5)4;Q(2,0)5;Q(2,1)6;Q(2,2)7;Q(2,3)8;Q(2.5,0.5)9;Q(3,0)10;Q(3,1)11;Q(3,2)12;Q(4,0)13;Q(4,1)14;POLYGON(0,0,1,0.25)10_13_12_6;POLYGON(0,1,0,0.25)8_12_6_3;POLYGON(1,0,0,0.25)3_6_10_0;TICK;MPP_Y0*Y10*Y13*Y12*Y8*Y6*Y3;TICK;MPP_X10*X0*X3*X6;TICK;MPP_X3*X6*X12*X8;TICK;MPP_X13*X10*X6*X12;TICK;MPP_Z10*Z0*Z3*Z6;TICK;MPP_Z8*Z12*Z6*Z3;TICK;MPP_Z13*Z10*Z6*Z12;TICK;RX_14_11_5_2_7_1;S_3_10_12;S_DAG_0_6_8_13;TICK;CX_1_0_11_10_7_8_5_6_2_3_14_13;R_4_9;TICK;CX_3_4_6_9;TICK;CX_6_4_11_9;TICK;CX_3_1_6_7_11_14;TICK;CX_6_3_11_12;TICK;CX_6_11;TICK;MX_6;DT(0,0,0)rec[-1]_rec[-8];TICK;RX_6;TICK;CX_6_11;TICK;CX_6_3_11_12;TICK;CX_3_1_6_7_11_14;TICK;CX_6_4_11_9;TICK;CX_3_4_6_9;TICK;CX_1_0_11_10_7_8_14_13_5_6_2_3;M_4_9;DT(1.5,0.5,1)rec[-2];DT(2.5,0.5,1)rec[-1];TICK;MX_14_11_5_2_7_1;S_6_13_8_0;S_DAG_3_10_12;DT(4,1,2)rec[-6];DT(3,1,2)rec[-5];DT(2,1,2)rec[-4]_rec[-9];DT(1,0,2)rec[-3];DT(2,2,2)rec[-2];DT(0,1,2)rec[-1];TICK;MPP_X10*X0*X3*X6;DT(3,0,3)rec[-1]_rec[-10]_rec[-16];TICK;MPP_X8*X12*X6*X3;DT(1,1,4)rec[-1]_rec[-11]_rec[-16];TICK;MPP_X13*X10*X6*X12;DT(4,0,5)rec[-1]_rec[-12]_rec[-16];TICK;MPP_Z10*Z0*Z3*Z6;DT(3,0,6)rec[-1]_rec[-16];TICK;MPP_Z8*Z12*Z6*Z3;DT(2,3,7)rec[-1]_rec[-16];TICK;MPP_Z13*Z10*Z6*Z12;DT(4,0,8)rec[-1]_rec[-16];TICK;MPP_Y0*Y10*Y13*Y12*Y8*Y6*Y3;OI(0)rec[-1]_rec[-8]_rec[-9]_rec[-12]_rec[-13]}{here}
	for the Crumble link.}
	\label{fig:d3a6f2}
\end{figure}

\begin{figure}
	\centering
	\begin{tikzpicture}
	\node[anchor=south west, inner sep=0] (image)
		{\includegraphics[width=0.97\textwidth]{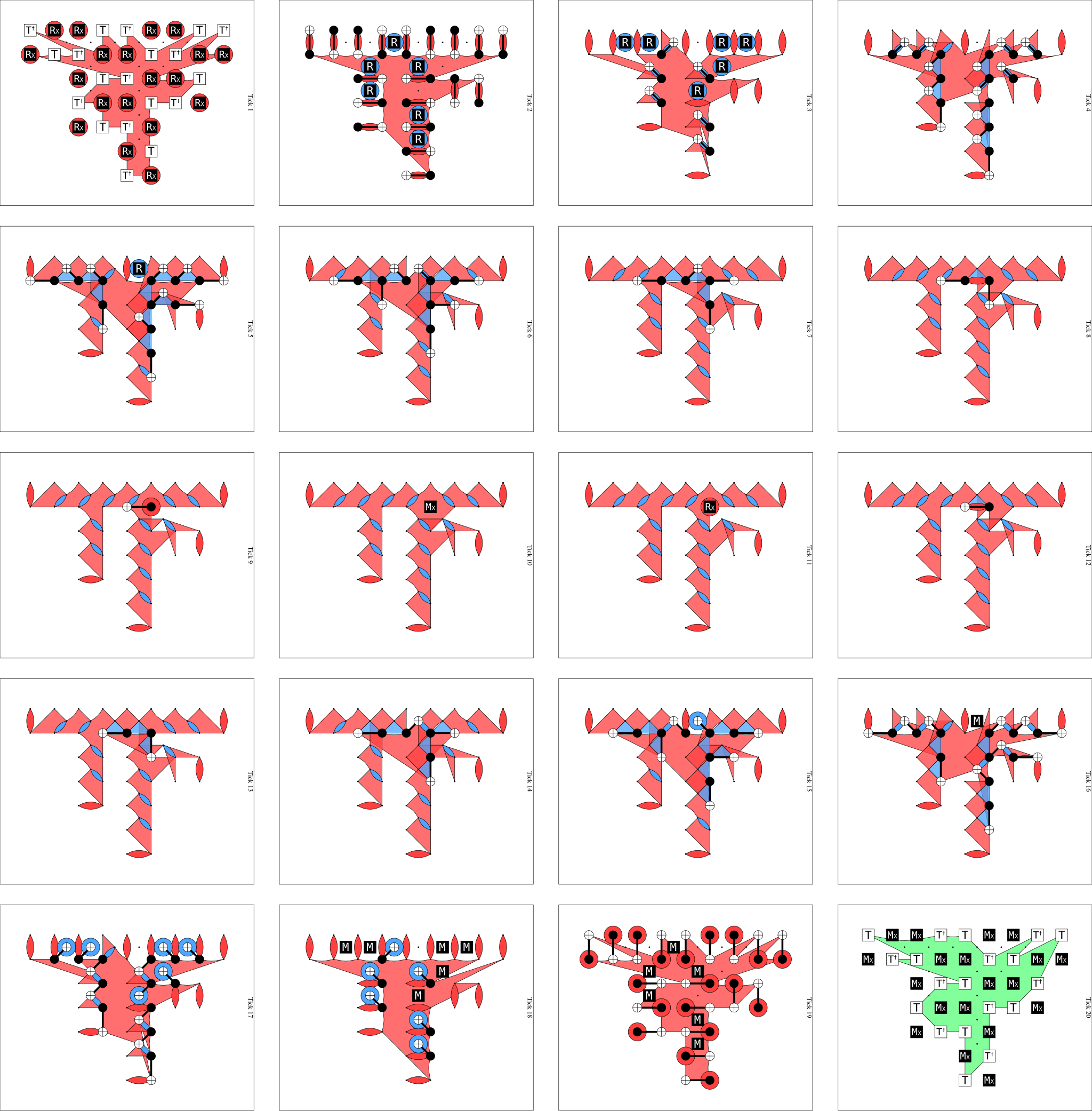}};
	\begin{scope}[x={(image.south east)}, y={(image.north west)}]
		\foreach \x/\y in {0.512/0.814, 0.256/0.001}
		\draw[draw=zblue, line width=1.5pt]
			(\x,\y) rectangle ++(0.234,0.185);
	\end{scope}
	\end{tikzpicture}
	\caption{A Z-flagged version of the distance-5 double-check circuit
	that has fault distance 5.
	The 13 added flag qubits are those prepared in $\ket{0}$,
	and the two additional circuit layers are boxed in blue.
	Click \href{https://algassert.com/crumble\#circuit=Q(0,0)0;Q(0,1)1;Q(1,0)2;Q(1,1)3;Q(1.5,0.5)4;Q(2,0)5;Q(2,1)6;Q(2,2)7;Q(2,3)8;Q(2,4)9;Q(2.5,0.5)10;Q(2.5,1.5)11;Q(2.5,2.5)12;Q(3,0)13;Q(3,1)14;Q(3,2)15;Q(3,3)16;Q(3,4)17;Q(3.5,0.5)18;Q(4,0)19;Q(4,1)20;Q(4,2)21;Q(4,3)22;Q(4,4)23;Q(4,5)24;Q(4,6)25;Q(4.5,0.5)26;Q(4.5,1.5)27;Q(4.5,2.5)28;Q(4.5,3.5)29;Q(4.5,4.5)30;Q(5,0)31;Q(5,1)32;Q(5,2)33;Q(5,3)34;Q(5,4)35;Q(5,5)36;Q(5,6)37;Q(5.5,0.5)38;Q(5.5,1.5)39;Q(6,0)40;Q(6,1)41;Q(6,2)42;Q(6,3)43;Q(6.5,0.5)44;Q(7,0)45;Q(7,1)46;Q(7,2)47;Q(7,3)48;Q(8,0)49;Q(8,1)50;POLYGON(0,0,1,0.25)13_19_32_21_15_6;POLYGON(0,0,1,0.25)41_45_49_47;POLYGON(0,0,1,0.25)34_43_36_23;POLYGON(0,1,0,0.25)8_15_6_3;POLYGON(0,1,0,0.25)32_41_47_43_34_21;POLYGON(0,1,0,0.25)17_23_36_25;POLYGON(1,0,0,0.25)3_6_13_0;POLYGON(1,0,0,0.25)19_45_41_32;POLYGON(1,0,0,0.25)15_21_34_23_17_8;TICK;MPP_X0*X13*X6*X3_X19*X45*X41*X32_X15*X21*X34*X23*X17*X8;TICK;MPP_X32*X41*X47*X43*X34*X21_X3*X6*X15*X8_X17*X23*X36*X25;TICK;MPP_X13*X19*X32*X21*X15*X6_X34*X43*X36*X23_X41*X45*X49*X47;TICK;MPP_Z0*Z13*Z6*Z3_Z19*Z45*Z41*Z32_Z15*Z21*Z34*Z23*Z17*Z8;TICK;MPP_Z32*Z41*Z47*Z43*Z34*Z21_Z3*Z6*Z15*Z8_Z17*Z23*Z36*Z25;TICK;MPP_Z13*Z19*Z32*Z21*Z15*Z6_Z34*Z43*Z36*Z23_Z41*Z45*Z49*Z47;TICK;MPP_Y3*Y6*Y0*Y13*Y19*Y32*Y45*Y41*Y47*Y43*Y34*Y36*Y25*Y23*Y17*Y8*Y15*Y21*Y49;TICK;RX_31_40_2_5_7_9_22_24_48_1_14_20_46_42_16_33_35_37_50;S_3_13_32_15_17_36_34_47_45;S_DAG_19_21_23_25_41_43_49_6_8_0;TICK;CX_1_0_2_3_5_6_14_13_20_19_31_32_40_41_46_45_42_43_48_47_33_21_22_34_35_23_24_36_37_25_7_15_16_8_9_17_50_49;R_11_27_30_29_12_18;TICK;CX_16_12_35_29_36_30_15_11_33_27_14_18;R_39_28_44_38_4_10;TICK;CX_16_17_36_37_14_11_32_27_33_28_35_30_34_29_15_12_42_39_46_44_41_38_3_4_6_10;TICK;CX_35_36_15_16_14_10_42_47_33_39_32_38_34_28_46_50_41_44_3_1_6_4;R_26;TICK;CX_34_35_14_15_33_42_20_18_41_46_32_26_6_3;TICK;CX_14_6_33_34_20_26_32_41;TICK;CX_20_14_32_33;TICK;CX_32_20;TICK;MX_32;DT(1,1,0)rec[-1]_rec[-2];TICK;RX_32;TICK;CX_32_20;TICK;CX_20_14_32_33;TICK;CX_14_6_33_34_20_26_32_41;TICK;CX_34_35_14_15_33_42_20_18_41_46_32_26_6_3;TICK;CX_35_36_15_16_14_10_42_47_33_39_32_38_34_28_46_50_41_44_3_1_6_4;M_26;DT(4.5,0.5,1)rec[-1];TICK;CX_16_17_36_37_14_11_32_27_33_28_35_30_34_29_15_12_42_39_46_44_41_38_3_4_6_10;TICK;CX_16_12_35_29_36_30_15_11_33_27_14_18;M_39_28_44_38_4_10;DT(5.5,1.5,2)rec[-6];DT(4.5,2.5,2)rec[-5];DT(6.5,0.5,2)rec[-4];DT(5.5,0.5,2)rec[-3];DT(1.5,0.5,2)rec[-2];DT(2.5,0.5,2)rec[-1];TICK;CX_1_0_2_3_5_6_14_13_20_19_31_32_40_41_46_45_42_43_48_47_33_21_22_34_35_23_24_36_37_25_16_8_9_17_7_15_50_49;M_11_27_30_29_12_18;DT(2.5,1.5,3)rec[-6];DT(4.5,1.5,3)rec[-5];DT(4.5,4.5,3)rec[-4];DT(4.5,3.5,3)rec[-3];DT(2.5,2.5,3)rec[-2];DT(3.5,0.5,3)rec[-1];TICK;MX_22_24_7_9_2_5_31_40_48_1_14_20_46_42_16_33_35_37_50;S_19_21_23_25_41_43_6_8_0_49;S_DAG_3_13_32_15_17_36_34_47_45;DT(4,3,4)rec[-19];DT(4,5,4)rec[-18];DT(2,2,4)rec[-17];DT(2,4,4)rec[-16];DT(1,0,4)rec[-15];DT(2,0,4)rec[-14];DT(5,1,4)rec[-13]_rec[-33];DT(6,0,4)rec[-12];DT(7,3,4)rec[-11];DT(0,1,4)rec[-10];DT(3,1,4)rec[-9];DT(4,1,4)rec[-8];DT(7,1,4)rec[-7];DT(6,2,4)rec[-6];DT(3,3,4)rec[-5];DT(5,2,4)rec[-4];DT(5,4,4)rec[-3];DT(5,6,4)rec[-2];DT(8,1,4)rec[-1];TICK;MPP_X0*X13*X6*X3_X19*X45*X41*X32_X15*X21*X34*X23*X17*X8;DT(0,0,5)rec[-3]_rec[-55];DT(4,0,5)rec[-2]_rec[-36]_rec[-54];DT(3,2,5)rec[-1]_rec[-53];TICK;MPP_X32*X41*X47*X43*X34*X21_X3*X6*X15*X8_X17*X23*X36*X25;DT(5,1,6)rec[-3]_rec[-39]_rec[-55];DT(1,1,6)rec[-2]_rec[-54];DT(3,4,6)rec[-1]_rec[-53];TICK;MPP_X13*X19*X32*X21*X15*X6_X34*X43*X36*X23_X45*X49*X47*X41;DT(3,0,7)rec[-3]_rec[-42]_rec[-55];DT(5,3,7)rec[-2]_rec[-54];DT(6,1,7)rec[-1]_rec[-53];TICK;MPP_Z0*Z13*Z6*Z3_Z19*Z45*Z41*Z32_Z15*Z21*Z34*Z23*Z17*Z8;DT(0,0,8)rec[-3]_rec[-55];DT(4,0,8)rec[-2]_rec[-54];DT(3,2,8)rec[-1]_rec[-53];TICK;MPP_Z32*Z41*Z47*Z43*Z34*Z21_Z3*Z6*Z15*Z8_Z17*Z23*Z36*Z25;DT(5,1,9)rec[-3]_rec[-55];DT(1,1,9)rec[-2]_rec[-54]}{here}
	for the Crumble link.}
	\label{fig:d5a19f13}
\end{figure}
\begin{SCfigure}
	\includegraphics[width=0.65\linewidth]{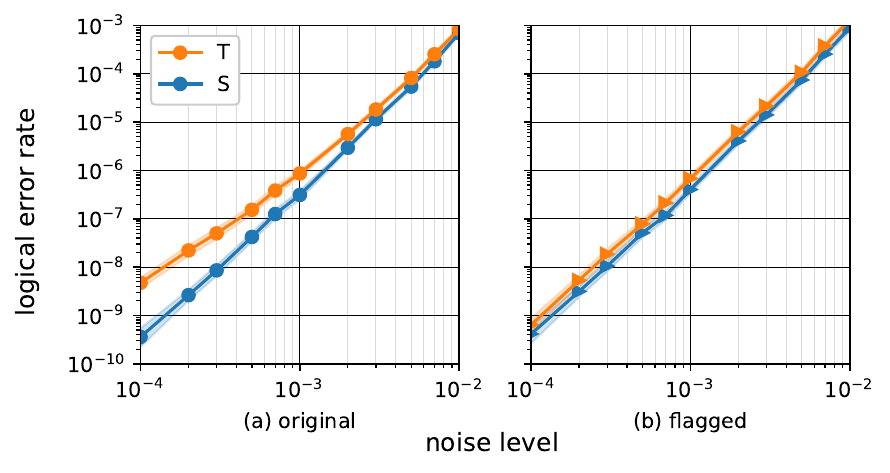}
	\caption{The logical error rate per kept shot of
	distance-3 cultivation before the escape stage.
	Blue (orange) data correspond to cultivating S (T) states.
	Shading shows the region within a factor of 1000 of the maximum likelihood.}
	\label{fig:d3_pre_escape_ler}
\end{SCfigure}

\begin{SCfigure}
	\includegraphics[width=0.47\linewidth]{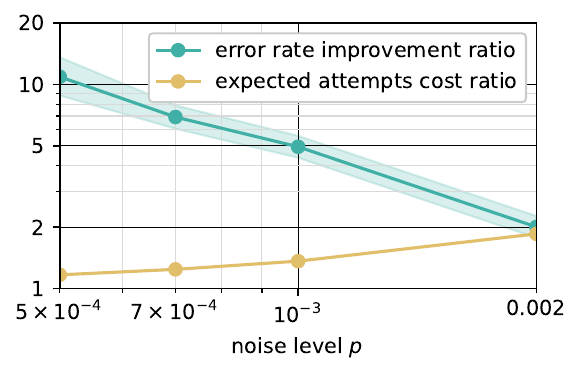}
	\caption{The pre-escape performance of distance-5 T-state cultivation,
	with and without Z flags.
	Turquoise:
	the logical error rate ratio (original / flagged);
	sand:
	the expected attempts ratio (flagged / original);
	shading shows 95\% Bayesian credible intervals.
	This figure and \cref{fig:d5_pre_escape_ler} are derived from the same simulation data.}
	\label{fig:improvement_and_cost_ratios}
\end{SCfigure}

\subsection{Pre-Escape Performance}\label{sec:pre_escape_performance}
To see how our Z-flagged MSC performs in practice,
we use the fast universal simulator SymFT~\cite{Fang2026a}
to simulate original and Z-flagged MSC
up to and excluding the escape stage.
\Cref{fig:d3_pre_escape_ler} shows that our Z flags recover
the intended fault distance of MSC-3,
thus pinning the final double check as
the sole culprit degrading its fault distance.
For MSC-3 in practice we can see that these Z flags are overkill,
since the malignant 2-fault configurations they detect
barely outweigh the higher-fault-count configurations they introduce.

For MSC-5 however \cref{fig:d5_pre_escape_ler,fig:improvement_and_cost_ratios} show they are practical:
in addition to the numbers stated in the abstract,
at noise level $p =\num{5e-4}$
the Z flags improve the pre-escape logical error rate by 10.9$\times$,
costing a 1.17$\times$ increase in attempts per kept shot.
Note the simulated circuits have an intended fault distance of 4
as \cref{sec:magic_state_cultivation} explains,
so the true fault distances are 3 without and 4 with flags.
However,
a weighted-least-squares fit gives log--log slopes of
3.8 without and 5.1 with flags,
indicating the simulated noise levels do not yet reach the asymptotic scaling regime.
Our accompanying code~\cite{Chan2026c_quantum_bibstyle}
also includes an unsimulated circuit for Z-flagged MSC-5 with an intended fault distance of 5
i.e.\ with four instead of three stabiliser rounds before its distance-5 double check.

The analysis from Wan and Zapirain~\cite[Appendix~H.3--4]{Wan2026a}
more rigorously verifies that with our Z flags:
MSC-3,
three-round MSC-5,
and four-round MSC-5
indeed have fault distances 3, 4, and 5, respectively.
In the third case,
our Z flags restore the pre-escape logical error rate
from $\upTheta(p^3)$ to $\upTheta(p^5)$.
Pre-escape logical error rates are relevant to full MSC
if the escape-stage decoder confidence threshold is set high:
Chase and Labib~\cite[\S 4.2.2]{Chase2026} showed that in this regime,
the T--S discrepancy persists after the escape stage,
thus most of the end-to-end logical error
comes from the injection and cultivation stages
rather than the purely Clifford escape stage.

\subsection{Flag Design Process}\label{sec:flag_design_process}
For distance 3:
we added the minimum number of Z flags,
preferring short-range interactions,
that together detects each X fault in \cref{fig:all_2_fault_configurations}.
For distance 5:
there are too many ($3 +601$) malignant configurations of \num{<5} faults
to individually design flags for.
Instead,
we continued the trend from distance 3,
adding a Z flag between each of the 18 consecutive qubit pairs
in the tree along which the GHZ state is (un)encoded;
we verified that the resulting double check has fault distance 5.
We then removed the flag at each of the 5 leaves of the tree
as we found this did not reduce fault distance.
Finally,
we minimised the lifetime, hence the idling error, of each flag qubit.
\Cref{alg:probability_trivial_syndrome,prop:resulting_fidelity}
were crucial as they enabled the fast verification of whether
a candidate Z-flagged double check had the intended fault distance.
\section{Conclusion}\label{sec:conclusion}
Our paper makes three main contributions.
First,
we analytically show that distance-3 (-5) T-state cultivation
has fault distance 2 (3) before its escape stage;
this is the main reason for the T--S discrepancy
in logical error rate.
Second,
we derive a general formula for,
and an algorithm for calculating,
the acceptance probability of
an arbitrary encoded mixed state suffering a physical Clifford error.
Third,
we design 2D-local flags that restore
the degraded fault distance of T-state cultivation,
and which are practical for distance 5.
These flags demonstrate the utility of considering
coherent non-Pauli errors arising from non-Clifford circuits
for improving their fault tolerance.
There are various avenues for future work.

Surti et al.~\cite{Surti2026} show that
in many logical magic state preparation circuits
(even those with multiple layers of non-Cliffords),
each Pauli error propagates to a Clifford error.
\Cref{prop:resulting_fidelity} could be generalised to
the same level of generality as \cref{theorem:probability_trivial_syndrome}
using the same proof technique.
These two facts could then be combined with
\cref{alg:probability_trivial_syndrome}
to quickly and analytically verify the fault distance
(by enumerating low-fault-count fault configurations)
of other circuits, such as
the double checks~\cite{Chen2026,Vaknin2025,Claes2025}
or the constant-depth logical measurement by gauging~\cite{Hetenyi2026}
from more recent cultivation proposals.
The transversal-T code-switching protocol
from Daguerre et al.~\cite{Daguerre2025,Daguerre2025a},
which is an alternative to cultivation,
could also be analysed in this way.

One motivation for designing the Z-flagged circuits
was to retain the use of the double check
for its low spacetime cost under 2D connectivity constraints.
As \cref{sec:z_flagged_double_checks} mentions however,
the Z flags themselves complicate the connectivity
to next-nearest-neighbour;
it would be interesting to devise a qubit layout
that maintains nearest-neighbour connectivity.
This should be combined with
a more rigorous and systematic flag design process
than the hand-optimised one described in \cref{sec:flag_design_process}.
Additionally,
it is unknown how degraded the fault distance is for higher-distance MSC;
if the trend continues from MSC-3 and -5,
our Z flags could become increasingly effective at improving
the logical error rate.
\section{Author Contributions}
AS and ZS performed early exploration of the cultivation protocol.
TC and ZC discovered the degraded fault distance.
TC wrote the paper,
constructed its mathematical statements,
designed the Z-flagged double checks,
developed the code,
and ran the numerics.
AS and ZC supervised the project.
All authors revised the paper.

\begin{acknowledgments}
We thank
Craig Gidney,
Theerapat Tansuwannont,
Yutaka Hirano,
Tom Scruby,
Keisuke Fujii,
and Kwok Ho Wan,
for useful discussions.
TC acknowledges the use of Crumble
in designing the flag circuits.
TC acknowledges support from
an EPSRC DTP studentship and
JST ASPIRE Japan Grant Number JPMJAP2319.
AS thanks UKRI for the Future Leaders Fellowship Theory to Enable Practical Quantum Advantage (MR/Y015843/1). ZC acknowledges support from the EPSRC Quantum Technologies Career Acceleration Fellowship (UKRI1226).
We acknowledge
the use of
the University of Oxford Advanced Research Computing
(ARC)
facility~\cite{Richards2015_quantum_bibstyle},
and three EPSRC projects:
QCS Hub (EP/T001062/1),
RoaRQ (EP/W032635/1),
and SEEQA (EP/Y004655/1). 
We acknowledge the use of ChatGPT Edu
to assist with the programming and testing
of our emulation code,
and we take full responsibility for its content.
For the purpose of Open Access, the authors have applied a CC BY public copyright licence to any Author Accepted Manuscript version arising from this submission.
\end{acknowledgments}

\paragraph{Note added:}
Two related preprints were announced in the same arXiv posting as this preprint.
Wan and Zapirain~\cite{Wan2026a} exactly compute the logical error rate of
cultivation using Pauli propagation~\cite{Rudolph2026}
and tensor-network constructions.
Their approach is similar to that in \cref{sec:alternative_clifford_error_analysis}
but further quotients the terms in each Pauli superposition by the stabilisers,
helping keep the calculation tractable.
Before announcement,
we privately communicated to them our finding that
the fault distances of MSC-3 and -5 are respectively 2 and 3,
which their analysis now corroborates.
Hartweg and Piñeiro Orioli~\cite{Hartweg2026} use tensor-network simulation methods
to analyse the T--S discrepancy and propose cultivation protocols
hosted entirely on the rotated surface code.

\bibliographystyle{quantum}
\bibliography{tchbib}

\appendix

\section{Formal Definitions}\label{sec:formal_definitions}

\subsection{Codespace Projector}
Given a commuting set $\mathcal G$ of Paulis,
define
\begin{equation}
\upPi_{\mathcal G} :=\prod_{P \in \mathcal G} \frac{\I +P}2
\label{eq:pj_defn}
\end{equation}
as the projector onto their simultaneous +1 eigenspace.
Specifically,
if $\mathcal G$ generates a stabiliser code,
$\upPi_{\mathcal G}$ projects onto its codespace.
In most cases in these appendices,
the representative of a logical operator
appears multiplied by $\upPi_{\mathcal G}$
so the choice of representative makes no difference;
when it is not, we specify the representative.

\subsection{Faults}\label{sec:faults}
We adapt the definitions in Beverland et al.~\cite[\S 3]{Beverland2024}
and more formally define some concepts introduced in \cref{sec:fault_distance}.
Given a stabiliser circuit,
define a set $\mathcal{F}$ of independent events called \emph{faults}.
Each fault $f$ specifies a pair $(P_f, \v\sigma_f)$,
where the \emph{resultant effect} $P_f$ is a Pauli acting on the data qubits,
and the \emph{signature} $\v\sigma_f$ is a binary vector
whose support indicates the detectors the fault flips.
Each fault induces a function $\pi_f(p)$
that outputs its occurrence probability,
and assume $\pi_f(p) =\upTheta(p)$ for noise level $p \ll 1$.

To construct $\mathcal{F}$ from a set of independent error events,
propagate each one to the end of the circuit,
thereby associating it with a pair $(P, \v \sigma)$.
For each distinct pair,
define one fault representing all its associated error events.
Its occurrence probability $\pi_f(p)$ is
the probability an odd number of those events occur.
Coarse-graining from error events to faults
makes the error analysis of circuits much faster.

A \emph{$|\mathcal C|$-fault configuration} (or \emph{configuration} for short)
is a combination $\mathcal C \subseteq \mathcal F$ of faults whose
resultant effect, signature, and occurrence probability is given respectively by
\begin{align}
	P(\mathcal C) &=\prod_{f \in \mathcal C} P_f \\
	\v\sigma(\mathcal C) &=\bigoplus_{f \in \mathcal C} \v\sigma_f \\
	\pi_\mathcal C(p)
	&=`\Big[\prod_{f \in \mathcal C} \frac{\pi_f(p)}{1 -\pi_f(p)}]
	\prod_{f \in \mathcal F} [1 -\pi_f(p)]
	\label{eq:configuration_probability}
\end{align}
where the second product in \cref{eq:configuration_probability}
is independent of $\mathcal C$.
\section{Additional Figures}\label{sec:additional_figures}
In \cref{sec:thorough_analysis} we enumerate all undetected fault configurations
in the distance-3 and -5 double checks.
\Cref{fig:fcc_against_fault_count} shows
the number of such configurations,
after grouping by fault count and whether they are benign or malignant.
\begin{figure}[t]
	\centering
	\includegraphics[width=0.72\linewidth]{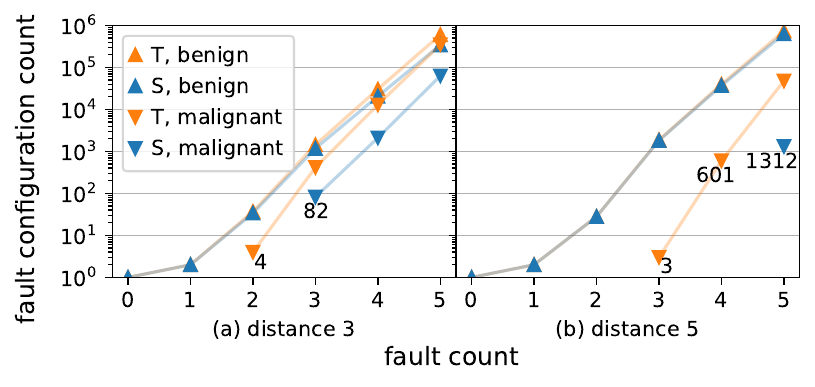}
	\caption{The number of undetected fault configurations
	in the double-check circuit for magic state cultivation,
	grouped by fault count i.e.\ number of faults in the configuration.
	(a) and (b) correspond to distance-3 and -5 cultivation respectively.
	In both subfigures:
	blue (orange) data correspond to cultivating S (T) states.
	Benign (malignant) configurations
	lead to a logical fidelity of 1 (0).
	Five low-fault-count malignant counts are annotated.}
	\label{fig:fcc_against_fault_count}
\end{figure}

In \cref{sec:z_flagged_double_checks} we show
our Z-flagged distance-3 double check as a detector slice diagram.
\Cref{fig:d3a6f2_conventional_diagram} shows
the same circuit more conventionally.
\begin{SCfigure}
	\begin{tikzpicture}
\begin{yquant}

qubit {0} q0;
qubit {a} qa;
qubit {b} qb;
qubit {1} q1;
[darkred]
qubit {\alpha} qalpha;
qubit {c} qc;
qubit {2} q2;
qubit {d} qd;
qubit {3} q3;
[darkred]
qubit {\beta} qbeta;
qubit {4} q4;
qubit {e} qe;
qubit {5} q5;
qubit {6} q6;
qubit {f} qf;

setstyle {darkred, dashed} qalpha,qbeta;
discard qa,qb,qalpha,qbeta,qc,qd,qe,qf;
hspace {1mm} -;
box {T} q1,q4,q5;
box {T$^\dag$} q0,q2,q3,q6;
init {$\ket{+}$} qa,qb,qc,qd,qe,qf;
align -;

cnot q0|qa;
cnot q4|qe;
cnot q3|qd;
cnot q2|qc;
cnot q1|qb;
cnot q6|qf;
[darkred]
init {$\ket{0}$} qalpha,qbeta;

[dashedcnot]
cnot qalpha|q1;
[dashedcnot]
cnot qbeta|q2;
[dashedcnot]
cnot qalpha|q2;
[dashedcnot]
cnot qbeta|qe;

align -;
cnot qa|q1;
cnot qd|q2;
cnot qf|qe;

cnot q1|q2;
cnot q5|qe;

cnot qe|q2;

dmeter {X} q2;
discard q2;
init {$\ket{+}$} q2;
align q2;

cnot qe|q2;

align -;
cnot q1|q2;
cnot q5|qe;

cnot qa|q1;
cnot qd|q2;
cnot qf|qe;

[dashedcnot]
cnot qalpha|q2;
[dashedcnot]
cnot qbeta|qe;
[dashedcnot]
cnot qalpha|q1;
[dashedcnot]
cnot qbeta|q2;

align -;
cnot q0|qa;
cnot q4|qe;
cnot q3|qd;
cnot q2|qc;
cnot q1|qb;
cnot q6|qf;
[darkred]
dmeter {Z} qalpha,qbeta;
discard qalpha,qbeta;

align -;
dmeter {X} qa,qb,qc,qd,qe,qf;
discard qa,qb,qc,qd,qe,qf;
box {T$^\dag$} q1,q4,q5;
box {T} q0,q2,q3,q6;

hspace {1mm} -;

\end{yquant}
\end{tikzpicture}
	\caption{A Z-flagged version of the distance-3 double-check circuit
	that detects the four fault configurations
	shown in \cref{fig:all_2_fault_configurations}.
	The two added flag qubits {\alpha} and {\beta},
	and the four additional CX layers,
	are drawn red and dashed.}
	\label{fig:d3a6f2_conventional_diagram}
\end{SCfigure}
\section{Reducing Code Conversion to Codespace Projection}
\label{sec:reducing_code_conversion_to_codespace_projection}
We first give two definitions.
\emph{Code conversion} is a quantum circuit,
which may include measurement and feedforward,
that changes encoding and preserves every logical state.
This is more general than code switching
as it may involve conversion into the same code family
e.g.\ surface code to surface code.
Though MSC uses a code conversion technique called grafting in its escape stage,
later MSC variants demonstrate that
other code conversion techniques can be used:
	measurement-based growth is used in~\cite{Sahay2026,Claes2025,Chen2026},
	teleportation via lattice surgery is used in~\cite{Itogawa2025,Hirano2025a,Thorvaldson2026},
	and local unitary transformations are used in~\cite{Itogawa2025,Sahay2026}.

Associated to each detector is a \emph{detecting region} which
(for stabiliser circuits)
is a set of circuit spacetime locations
each labelled by X, Y, or Z~\cite{McEwen2023},
and shows how the Pauli measured by that detector
evolves through the circuit at each timeslice.

\begin{lem}
\label{lem:insert_projector}
Let $V$ be a code conversion circuit
from an $\llbracket n,k\rrbracket$ input code generated by $\mathcal G$.
Suppose $n -k$ detecting regions have
independent timeslices $P_j$ at the input of $V$,
with eigenvalue $+1$ when the corresponding detector is unflipped.
Then in the absence of noise,
for an arbitrary mixed state $\rho$ entering $V$,
the probability those detectors
and any others within or after $V$ are unflipped is
\begin{equation}
	\pr(\v s =\v 0)
	=\tr(\upPi_{\mathcal G} \rho).
	\label{eq:insert_projector}
\end{equation}
Conditioned on this event,
inserting the codespace projector $\upPi_{\mathcal G}$
immediately before $V$ leaves the output state unchanged.
\end{lem}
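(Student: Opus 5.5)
The plan is to treat the detectors whose detecting regions cross the input of $V$ as effectively measuring the stabilisers $P_j$ on the input state. First I will show that the $P_j$ generate the same group as $\mathcal G$, up to signs fixed by the noiseless detector values. Each $P_j$ is a timeslice of a detecting region of the stabiliser circuit $V$. It therefore commutes with every logical operator of the input code: if it anticommuted with some logical Pauli, then, since $V$ preserves every logical state, the corresponding detector would be nondeterministic on some logical input. Likewise $P_j$ must have eigenvalue $+1$ on every codestate, because the detector is deterministic and unflipped for all code inputs. A Pauli that fixes the whole codespace lies in $\langle\mathcal G\rangle$. Since there are $n-k$ independent such $P_j$, they generate all of $\langle\mathcal G\rangle$, and hence $\upPi_{\{P_j\}}=\upPi_{\mathcal G}$.

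Next I use detecting-region propagation. Because $V$ is a stabiliser circuit (with feedforward), the parity of detector $j$ equals the outcome of measuring $P_j$ on the state entering $V$. Equivalently, backpropagating the measurements through the Clifford parts of $V$ turns the detector into a measurement of $P_j$ at the input. The joint event that these $n-k$ detectors are unflipped is therefore the projector $\upPi_{\{P_j\}}=\upPi_{\mathcal G}$ applied at the input. Its probability is $\tr(\upPi_{\mathcal G}\rho)$, and the post-event input state is $\upPi_{\mathcal G}\rho\upPi_{\mathcal G}/\tr(\upPi_{\mathcal G}\rho)$.

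It remains to handle the other detectors within or after $V$ and to show the output is unchanged. Conditioned on the input now lying in the codespace, every other detector is deterministic and unflipped, since noiseless code conversion of a codestate never flips detectors. This follows from the definition of detectors on valid inputs, extended by linearity to mixed codestates. The total probability is then still $\tr(\upPi_{\mathcal G}\rho)$.

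For the output statement I will use the following fact. Since the $P_j$ measurement outcomes are deterministic functions of measurement records inside $V$, the Kraus operator of $V$ restricted to the all-unflipped branch factors as $K\upPi_{\mathcal G}$. Consequently $K\rho K^\dagger = K\upPi_{\mathcal G}\rho\upPi_{\mathcal G}K^\dagger$, so inserting $\upPi_{\mathcal G}$ beforehand leaves the output unchanged.

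I expect the main obstacle to be making this factorisation rigorous when $V$ contains mid-circuit measurements and feedforward, where the P_j "measurement" is not literally performed at the input. I would handle it by commuting every measurement to the end via deferred measurement, so that $V$ becomes a Clifford unitary $C$ followed by $Z$ measurements. A detector is then a product of $Z$'s, $\prod Z$, satisfying $C^\dagger(\prod Z)C = P_j\otimes(\text{ancilla }Z/I)$, with ancillas starting in $\ket0$. This makes the projector identity exact.
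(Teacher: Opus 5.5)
Your proposal is correct and follows the same route as the paper. You pull each detector back to an input timeslice in $\langle\mathcal G\rangle$, use the fact that the $n-k$ independent $P_j$ generate $\langle\mathcal G\rangle$ so that their joint unflipped event is $\upPi_{\mathcal G}$ (which makes the other detectors redundant), and note that this projection already occurs inside $V$. Your deferred-measurement factorisation of the all-unflipped Kraus operators as $K\upPi_{\mathcal G}$ (valid for the Pauli feedforward of stabiliser circuits) makes the paper's one-line final step rigorous.
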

\begin{proof}
In the absence of noise,
each detector within or after $V$ can be pulled back along its detecting region
to its timeslice $Q$ at the input of $V$.
Code conversion preserves every logical state,
so $Q \in `<\mathcal G>$.
The $n -k =\rk `<\mathcal G>$ timeslices $P_j$ in particular are independent,
so $`<P_j> =`<\mathcal G>$.
Conditioning on their corresponding $n -k$ detectors being unflipped
projects $\rho$ onto the simultaneous $+1$ eigenspace of $\mathcal G$,
making redundant the condition any other detector is unflipped.
This proves \cref{eq:insert_projector}.

Conditioned on this event:
the projection already occurs within or after $V$,
so inserting it beforehand leaves the output unchanged.
\end{proof}
\section{Clifford Error Analysis Proofs}
\label{sec:clifford_error_analysis_proofs}
In this section we prove \cref{theorem:probability_trivial_syndrome,prop:resulting_fidelity}.
First note the following.
We can write the density operator of
the logical state corresponding to \cref{eq:pauli_decomposition} in terms of
	$\mathcal G$ (defining the code)
	and the coefficients $`{\alpha_L}_L$ (defining the logical state):
\begin{align}
\overline{\rho}
=\upPi_{\mathcal G} 2^{-k} \sum_{L \in \mathcal{P}_k} \alpha_L \overline{L}.
\label{eq:density_logical_psi}
\end{align}
Here, $\upPi_{\mathcal G}$ projects onto the codespace of $\mathcal G$,
and the other term projects onto the specific logical state within the codespace.
Note $\upPi_{\mathcal G}$ commutes with each $\overline{L}$.

\subsection{Proof of \texorpdfstring
{\cref{theorem:probability_trivial_syndrome}}
{Theorem~\ref{theorem:probability_trivial_syndrome}}}
Let
$\mathcal A:=\langle\mathcal G\rangle$,
$\mathcal B:=\langle E^\dagger\mathcal G E\rangle$,
$\mathcal K:=\mathcal A\cap\mathcal B$,
and $r :=n-k$.
The probability to observe a trivial syndrome
is the expectation value of $\upPi_{\mathcal G}$
[recall \cref{eq:pj_defn}]:
\begin{align}
	\pr(\v s =\v 0)
	&=\tr`*(\upPi_{\mathcal G} E\overline{\rho} E^\dag) \notag \\
	&=\tr`*(\upPi_{E^\dag \mathcal G E} \overline{\rho}) \notag \\
	&\overset{\eqref{eq:density_logical_psi}}=
	2^{-k}\sum_{L\in\mathcal P_k}\alpha_L \tr\left(
	\upPi_{E^\dagger \mathcal G E}\upPi_{\mathcal G} \overline L
	\right).
	\label{eq:sum_of_traces_fully_expanded}
\end{align}
If $\mathcal B\cap(-\mathcal A)\ne\varnothing$,
then $\upPi_{E^\dagger \mathcal G E}\upPi_{\mathcal G}$
contains $(\I -P)(\I +P) =0$ for some Pauli $P \in \mathcal A$,
so the probability is zero.
Assume otherwise.

Multiply out the two stabiliser projectors:
\begin{equation}
\label{eq:trace_triple}
\pr(\v s =\v 0)
=2^{-2r -k} \sum_{L \in \mathcal P_k} \alpha_L
\sum_{a \in \mathcal A} \sum_{b \in \mathcal B}
\tr`\big(ba \overline{L}).
\end{equation}
Since nontrivial Paulis are traceless,
we need only consider products
$ba \overline{L}$
that are proportional to $\I$.
Thus, each $L$ term contributes iff \cref{eq:logical_membership} holds;
equivalently,
iff $\mathcal B\cap\omega_L\overline L\mathcal A$ is nonempty
for some sign $\omega_L\in\{\pm1\}$.
(This sign is real:
	$\overline{L}$ and all Paulis in $\mathcal A$ and $\mathcal B$ are Hermitian,
	and $\overline{L}$ commutes with each Pauli in $\mathcal A$.
The sign is also unique:
	if $b=\omega\overline L a$ and
	$b'=-\omega\overline L a'$ with
	$b,b'\in\mathcal B$ and $a,a'\in\mathcal A$,
	then $bb'=-aa'\in\mathcal B\cap(-\mathcal A)$,
	a contradiction.)
If the intersection is nonempty and contains $b_L$,
then it is a coset of $\mathcal K$:
\begin{equation}
\mathcal B\cap\omega_L\overline L\mathcal A=b_L\mathcal K,
\end{equation}
so it has $2^{\rk\mathcal K}$ elements.
Each surviving trace contributes
$\tr \omega_L \I= \omega_L 2^n =\omega_L 2^{r +k}$,
and therefore the total prefactor is
$2^{-r+\rk\mathcal K}$,
where
$r-\rk\mathcal K
=\rk\langle\mathcal G\cup E^\dagger\mathcal G E\rangle-r
=s$. \qed

\subsection{Proof of \texorpdfstring
{\cref{prop:resulting_fidelity}}
{Proposition~\ref{prop:resulting_fidelity}}}
The fidelity of the resulting state $\ket{\Omega}$
to $\ket{\overline{T}}$ is $|`<\overline{T}|\Omega>|^2$.
Using the notation of \cref{eq:pj_defn}:
\begin{align}
	\ket{\Omega}
	&=\frac1{\sqrt{\pr(\v s =\v 0)}}
	\upPi_{\mathcal G}
	E\ket{\overline{T}} \notag \\
	|`<\overline{T}|\Omega>|^2 \pr(\v s =\v 0)
	&=\tr`\big(E^\dag \proj{\overline{T}}E \proj{\overline{T}}).
\end{align}
Similar to \cref{eq:density_logical_psi},
we can write $\proj{\overline{T}} =\upPi_{\mathcal G} \frac12(\I +\overline{H_+})$,
where $\overline{H_+}$ here refers to
the logical representative from \cref{eq:logical_H_is_transversal}.
Let $a=0$ if $E$ commutes with $\overline{H_+}$
and $a=1$ if it anticommutes,
then $E^\dag \overline{H_+}E =(-1)^a \overline{H_+}$.
Substitute these two facts in:
\begin{align}
|`<\overline{T}|\Omega>|^2 \pr(\v s =\v 0)
&=\tr`\bigg[E^\dag \upPi_{\mathcal G} E \frac{\I +(-1)^a \overline{H_+}}2 \upPi_{\mathcal G} \frac{\I +\overline{H_+}}2]
\notag \\
&=
\tr`\bigg[\upPi_{E^\dagger \mathcal G E}\upPi_{\mathcal G}
\frac{\I +(-1)^a \overline{H_+}}2
\frac{\I +\overline{H_+}}2].
\end{align}
If $a =1$,
the last two projectors multiply to zero,
so $|`<\overline{T}|\Omega>|^2 =0$.
If $a =0$,
they are the same projector;
since projectors are idempotent,
we keep only one of them in our expression.
That remaining logical projector is multiplied by $\upPi_{\mathcal G}$
i.e.\ acts only on the codespace,
so we can replace it with
the logically equivalent projector
$\sum_{L \in \mathcal P_1} \alpha_L \overline{L}$
from \cref{eq:density_logical_psi}
(here $\alpha_\I =1$ and $\alpha_X =\alpha_Y =\frac1{\sqrt{2}}$).
Then:
\begin{equation}
|`<\overline{T}|\Omega>|^2 \pr(\v s =\v 0)
=2^{-1}\sum_{L\in\mathcal P_k}\alpha_L \tr\left(
\upPi_{E^\dagger \mathcal G E}\upPi_{\mathcal G} \overline L
\right),
\label{eq:sum_of_traces_T_fully_expanded}
\end{equation}
This is identical to the right-hand side of
\cref{eq:sum_of_traces_fully_expanded},
so $|`<\overline{T}|\Omega>|^2 =1$.

The stated time complexity comes from determining $a$;
this requires checking the commutation of each pair of tensor factors,
which takes $\mathcal O(n)$ time. \qed
\section{Acceptance Probability Algorithm}
\label{sec:acceptance_probability_algorithm}
In this section,
$\mathbb{F}_2$ denotes the finite field with $2$ elements.
For $\v v \in \mathbb{F}_2^n$,
write $P_{\v v} :=\bigotimes_{q =1}^n P^{v_q}$ for $P \in\{\X, \Z\}$.

\begin{prop}
\label{prop:probability_runtime}
\Cref{alg:probability_trivial_syndrome} computes
\cref{eq:probability_trivial_syndrome}
in $\mathcal O(n^3+N_\alpha k^2)$ time,
where $N_\alpha \le 4^k$ is the number of nonzero coefficients
$\alpha_L$ defining $\rho$ by \cref{eq:pauli_decomposition}.
\end{prop}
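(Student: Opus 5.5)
We prove \cref{prop:probability_runtime} by fixing a concrete realisation of \cref{alg:probability_trivial_syndrome} in the symplectic (binary) representation and bounding each stage. Each Pauli in $\mathcal G$ is stored as a phase together with a vector $(\v x,\v z)\in\mathbb F_2^{2n}$ with $P \propto \X_{\v x}\Z_{\v z}$. We use three stages, in order.

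\textbf{Stage 1: conjugate the generators.} First we compute the transformed generators $E^\dagger\mathcal G E$. A Clifford $E$ is specified by its tableau, i.e.\ the images of the $2n$ single-qubit generators $\X_q,\Z_q$ with signs. Conjugating one Pauli multiplies up to $2n$ tableau rows, so conjugating all $n-k$ generators costs $\mathcal O(n^3)$, or $\mathcal O(n^2)$ each via the standard symplectic matrix--vector product with phase tracking. For the product-form errors used in this paper it is in fact $\mathcal O(n)$ per generator.

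\textbf{Stage 2: reduce the stacked generators.} We stack the $2(n-k)$ rows of $\mathcal G\cup E^\dagger\mathcal G E$ with $\mathcal G$ first and Gaussian-eliminate over $\mathbb F_2$, Aaronson--Gottesman style, tracking signs. This takes $\mathcal O(n^3)$ time and yields three outputs.
\begin{itemize}
\item The rank, and hence $s$.
\item A basis of $\mathcal K=\mathcal A\cap\mathcal B$, given by the dependencies among the rows, i.e.\ the kernel of the stacked matrix.
\item Whether the first branch of \cref{eq:probability_trivial_syndrome} holds. Each dependency gives a product $b\in\mathcal B$ that equals some $\pm a$ with $a\in\mathcal A$, and the phase arithmetic tells us which sign occurs. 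If any basis element of $\mathcal K$ carries the sign $-1$, then $\mathcal B\cap(-\mathcal A)\neq\varnothing$ and we output $0$. Conversely, any element of $\mathcal B\cap(-\mathcal A)$ has the same binary vector as some element of $\mathcal K$, so checking the basis suffices.
\end{itemize}
We then put the combined group $\langle \mathcal G\cup E^\dagger\mathcal GE\rangle$ in reduced row-echelon form with pivot columns recorded. Membership of any Pauli in this group can then be tested in $\mathcal O(n\cdot\rk)$ time, or faster if the logical operators are also pre-reduced.

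\textbf{Stage 3: test each logical with nonzero coefficient.} For each $L$ with $\alpha_L\neq0$ we must decide whether $\overline L\in\pm\mathcal B\mathcal A$ and find $\omega_L$. Done naively, each membership test costs $\mathcal O(n^2)$, giving $\mathcal O(n^3+N_\alpha n^2)$, which is weaker than claimed. The fix is to preprocess once in $\mathcal O(n^3)$: reduce the $2k$ logical generators $\overline{X}_j,\overline{Z}_j$ modulo the combined row-echelon form, recording each residual as a vector in a $2k$-dimensional quotient basis together with its sign. Since $\overline L$ is a product of at most $2k$ such generators, its residual is a sum of $\mathcal O(k)$ length-$2k$ vectors, costing $\mathcal O(k^2)$. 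The test $\overline L\in\pm\mathcal B\mathcal A$ holds exactly when this residual is zero.

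\textbf{Main obstacle: the sign $\omega_L$ in $\mathcal O(k^2)$.} Phases of Pauli products depend on commutation between the factors. We would precompute the pairwise symplectic products and the sign of each logical generator's reduction. Then $\omega_L$ follows from the stored signs plus an $\mathcal O(k^2)$ sum of pairwise phase corrections, using the fact that the $\mathcal A$-part commutes with $\overline L$. Summing $\omega_L\alpha_L$ and multiplying by $2^{-s}$ then gives $\mathcal O(n^3+N_\alpha k^2)$ in total. Uniqueness of $\omega_L$ is guaranteed by \cref{theorem:probability_trivial_syndrome}. This sign bookkeeping is the step I would check most carefully.
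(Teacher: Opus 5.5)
Your proposal is correct in outline, but it reaches the bound by a different route from the paper. The paper never handles $\mathcal G\cup E^\dagger\mathcal G E$ in the physical frame. It passes to the unencoded frame $F=C^\dagger EC$, where the original group is $\langle \Z_1,\ldots,\Z_r\rangle$, so quotienting by $\mathcal A$ amounts to deleting the $Z$-coordinates of the first $r$ qubits. A single $(n+k)\times r$ matrix $R$ then carries everything:
\begin{itemize}
\item $\rk R=s$;
\item $\ker R$ parametrises $\mathcal B\cap\pm\mathcal A$, where the sign is linear (\cref{lem:kernel_chi_linear});
\item membership in $\operatorname{im} R$ is exactly \cref{eq:logical_membership};
\item \cref{lem:trivial_syndrome_phase_evaluation} gives the sign in closed form.
\end{itemize}
You instead perform the quotient by stacked Gaussian elimination of all $2r$ rows, then reduce the $2k$ logical generators modulo the joint span. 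Your version needs no encoder and works from any generating set. The cost is messier phase bookkeeping through products of mutually non-commuting elements of $\mathcal A$ and $\mathcal B$, which you leave as a sketch. The paper's frame makes both the quotient and the phase explicit, and it is literally \cref{alg:probability_trivial_syndrome}. Strictly, you bound an equivalent procedure; to claim the statement for the algorithm as written, you would need to identify your dependency, residual and sign data with $\ker R$, $\operatorname{im} R$ and $\chi$.

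There are two points to tighten.

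\emph{Stage 2.} Checking a basis suffices because the sign $\sigma$ in $b=\sigma a$ is a homomorphism on the dependency space. This holds since $b_1b_2=\sigma_1\sigma_2a_1a_2$, and each binary vector has a unique preimage in $\mathcal A$ and in $\mathcal B$. Your ``conversely'' sentence does not supply this.

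\emph{The sign step you flagged.} The stored per-generator data must be phases in $\{\pm1,\pm i\}$, and the factor $i^{x\cdot z}$ coming from $\overline Y=i\overline X\,\overline Z$ must be added before halving. A formula built only from $\pm1$ signs and $\mathbb F_2$ pairwise commutation bits is an $\mathbb F_2$-quadratic function of $L$'s bits, and $\omega_L$ need not be one. For example, take $r=3$, $k=4$ and $F^\dagger\Z_jF=\Z_j\otimes P_j$ with $P_1,P_2,P_3=\X\X\X\X,\Z\Z\I\I,\Z\I\Z\I$. The sign on their span is then $(-1)^{t_1t_2+t_1t_3+t_1t_2t_3}$, which is cubic. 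The correct form is: a $\mathbb Z_4$-linear term plus the number of Y factors in $L$, halved, plus an $\mathbb F_2$-bilinear term. This is what \cref{eq:chi_phase_formula} encodes, since on a fibre the count $(X\v u)\cdot(Z\v u)$ is just the number of Y factors in $L$. It is still evaluable in $\mathcal O(k^2)$, so your bound survives.
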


\begin{algorithm}[H]
\caption{Probability of a trivial syndrome after suffering a Clifford error.}
\label{alg:probability_trivial_syndrome}
\begin{algorithmic}[1]
\Require Encoder $C$, Clifford error $E$,
and logical coefficients $\{\alpha_L:L\in\mathcal P_k\}$.
\Ensure $\pr(\v s=\v 0)$.
\State Set $F\gets C^\dag EC$ and build $R$ as in \cref{eq:forget_check_Z_matrix}.
\State Row-reduce $R$ to obtain $\rk R$, a basis $\mathcal{B}$ for $\ker R$,
an image test, and a solver for $R\v x=\v y$.
\State Precompute the data to evaluate $\chi$ using
\cref{eq:chi_phase_formula}.
\ForAll{$\v e \in \mathcal{B}$}
	\If{$\chi(\v e)=1$}
		\State \Return $0$
	\EndIf
\EndFor
\State $\eta\gets0$
\ForAll{$L\in\mathcal P_k$ with $\alpha_L\ne0$}
	\If{$\v \rho(\I^{\otimes r}\otimes L) \in\im R$}
	\label{line:image_membership_test}
		\State $\eta\gets\eta+(-1)^{\chi(\v u_L)}\alpha_L$,
		where $\v u_L$ is any solution to
		$R\v u_L =\v \rho(\I^{\otimes r} \otimes L)$
	\EndIf
\EndFor
\State \Return $2^{-\rk R}\eta$
\end{algorithmic}
\end{algorithm}

\begin{proof}
Let $r :=n-k$.
It helps to work in unencoded space:
choose an encoder $C$ such that the encoded state is
$\overline{\rho} =C \big(\proj{0}^{\otimes r} \otimes \rho\big) C^\dag$.
Then the unencoded stabiliser generators are simply
$\mathcal Z:=\{\Z_1,\ldots,\Z_r\}$
and the unencoded Clifford error is $F:=C^\dag EC$.
Write the transformed stabiliser element corresponding to
$\v u\in\mathbb{F}_2^r$ as $\Z'_{\v u}:=F^\dag\Z_{\v u}F$.
In unencoded space, \cref{eq:trace_triple} becomes
\begin{equation}
\pr(\v s=\v 0)
=
2^{-2r-k}
\sum_{L\in\mathcal P_k}\alpha_L
\sum_{\v u,\v v\in\mathbb{F}_2^r}
\tr\left[\Z_{\v u}'\Z_{\v v}(\I^{\otimes r}\otimes L)\right].
\label{eq:single_conjugated_trace_expansion}
\end{equation}
As before,
we need only consider products
$\Z_{\v u}'\Z_{\v v}(\I^{\otimes r}\otimes L)$
that are proportional to $\I$.
For a given $L$,
this occurs if there exists a $\Z_{\v u}'$ whose...
\begin{enumerate}
	\item first $r$ tensor factors are in $\{\I, \Z\}$,
	\item last $k$ tensor factors match $L$ up to a sign.
\end{enumerate}
These two conditions can be restated as
the single equation
$R\v u=\v \rho(\I^{\otimes r}\otimes L)$,
where we have defined
$\v \rho(P)
:=(\v x^\T\mid z_{r+1},\ldots,z_n)^\T
\in\mathbb{F}_2^{n+k}$
for a Hermitian $n$-qubit Pauli $P$
whose binary symplectic vector is
$(\v x^\T \mid\v z^\T)^\T$,
and the matrix
\begin{equation}
R:=\big[\v \rho(F^\dag\Z_1F),\ldots,\v \rho(F^\dag\Z_rF)\big].
\label{eq:forget_check_Z_matrix}
\end{equation}
There can be multiple solutions to
$R\v u=\v \rho(\I^{\otimes r}\otimes L)$;
we define the \emph{fibre} over $L$ as the solution set:
\begin{equation}
\Phi_L:=
\{\v u\in\mathbb{F}_2^r:R\v u=\v \rho(\I^{\otimes r}\otimes L)\}.
\label{eq:fibre_defn}
\end{equation}
For $\v u\in\Phi_L$,
there is a unique $\v v(\v u)\in\mathbb{F}_2^r$ such that
$\Z_{\v v(\v u)}$ matches the first $r$ tensor factors of $\Z_{\v u}'$
i.e.
\begin{equation}
\Z'_{\v u}\Z_{\v v(\v u)}(\I^{\otimes r}\otimes L)
=(-1)^{\chi(\v u)}\I.
\label{eq:chi_defn}
\end{equation}
\Cref{lem:trivial_syndrome_phase_evaluation}
gives an explicit formula for $\chi: \mathbb{F}_2^r \to \mathbb{F}_2$
and shows that it depends only on $\v u$, not on $L$.

The zero case of \cref{theorem:probability_trivial_syndrome}
is detected on $\ker R$:
by \cref{eq:chi_defn} with $L=\I$,
$\chi(\v k)=1$ iff
$\Z'_{\v k}\in-\langle\mathcal Z\rangle$.
By \cref{lem:kernel_chi_linear},
checking $\chi$ on a basis of $\ker R$
detects precisely whether this happens.

Assume the algorithm does not return zero.
\Cref{line:image_membership_test} selects
exactly the $L$'s satisfying \cref{eq:logical_membership}.
For any solution $\v u_L\in\Phi_L$,
\cref{eq:chi_defn} identifies the sign in
\cref{eq:probability_trivial_syndrome} as
$\omega_L=(-1)^{\chi(\v u_L)}$.
This sign is solution-independent by
\cref{theorem:probability_trivial_syndrome}.
Finally,
$\rk R=s$
because $R$ records the transformed stabilisers modulo
the original unencoded stabiliser group.
Thus the accumulator $\eta$ is the sum in
\cref{eq:probability_trivial_syndrome},
and the algorithm returns the prefactor $2^{-s}=2^{-\rk R}$.

It remains to prove the stated time complexity:
\begin{enumerate}
	\item Tableau composition computes
	$F=C^\dag EC$ in $\mathcal O(n^3)$ time
	\cite[\S III]{Aaronson2004}
	\cite[\S 2.3.2]{Gidney2021a},
	after which the columns $F^\dag\Z_iF$ are read from the tableau.
	The same preprocessing row-reduces $R$, builds the image-membership test
	and precomputes the phase data in
	\cref{eq:chi_phase_formula}; checking $\chi$ on a basis of $\ker R$
	is also $\mathcal O(n^3)$ time.
	\item For each nonzero coefficient $\alpha_L$:
	one need not explicitly form $\v u_L\in\mathbb{F}_2^r$.
	After row-reduction, the membership test and
	$\chi(\v u_L)$ can be precomposed with the binary symplectic
	vector of $L$. This gives linear and quadratic forms in $2k$ variables,
	respectively, which can be evaluated in $\mathcal O(k^2)$ time.
	\qedhere
\end{enumerate}
\end{proof}

\noindent
The following kernel fact is used in the proof.

\begin{lem}
\label{lem:kernel_chi_linear}
The restriction $\chi|_{\ker R}$ is linear
i.e.\ $\chi(\v k+\v l)
=\chi(\v k)+\chi(\v l)\pmod 2$ for $\v k, \v l \in \ker R$.
\end{lem}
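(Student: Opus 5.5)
The plan is to derive the linearity directly from the defining relation \cref{eq:chi_defn}, specialised to $L=\I$. For $\v k\in\ker R$ we have $R\v k=\v 0=\v\rho(\I^{\otimes n})$, so $\v k\in\Phi_\I$. Hence there is a unique $\v v(\v k)$ with $\Z'_{\v k}\Z_{\v v(\v k)}=(-1)^{\chi(\v k)}\I$, that is, $\Z'_{\v k}=(-1)^{\chi(\v k)}\Z_{\v v(\v k)}$. Here I use that $\Z_{\v v}$ is self-inverse and that all these operators are Hermitian Paulis, so the sign is real. I will take this as the characterisation of $\chi$ on $\ker R$. It is consistent with the paper's stated fact that $\chi$ depends only on $\v u$, which is deferred to \cref{lem:trivial_syndrome_phase_evaluation}, but on the kernel uniqueness of $\v v$ already makes $\chi$ well defined.

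The key steps, in order, are as follows.

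\textbf{Step 1: the map $\v u\mapsto\Z'_{\v u}$ is a group homomorphism with no extra phase.} Since $\Z_{\v u}$ and $\Z_{\v l}$ commute and $\Z_{\v u+\v l}=\Z_{\v u}\Z_{\v l}$ exactly, conjugation by $F$ gives
\[
\Z'_{\v k+\v l}=F^\dag\Z_{\v k}\Z_{\v l}F=\Z'_{\v k}\Z'_{\v l}.
\]

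\textbf{Step 2: substitute the kernel characterisation.} This gives
\[
\Z'_{\v k+\v l}=(-1)^{\chi(\v k)+\chi(\v l)}\Z_{\v v(\v k)}\Z_{\v v(\v l)}=(-1)^{\chi(\v k)+\chi(\v l)}\Z_{\v v(\v k)+\v v(\v l)}.
\]
This step again uses that Z-type Paulis multiply without phase.

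\textbf{Step 3: conclude by uniqueness.} Note that $\v k+\v l\in\ker R$. By the uniqueness of $\v v(\v k+\v l)$ and of the sign in \cref{eq:chi_defn}, we get $\v v(\v k+\v l)=\v v(\v k)+\v v(\v l)$ and $\chi(\v k+\v l)=\chi(\v k)+\chi(\v l)\pmod 2$. Uniqueness holds because distinct Paulis $\pm\Z_{\v v}$ are linearly independent up to sign, and $+\Z_{\v v}\neq-\Z_{\v v}$.

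The main obstacle is making sure that the sign $\chi$ defined via \cref{eq:chi_defn} coincides with the sign in $\Z'_{\v k}=\pm\Z_{\v v}$ exactly, with no hidden phase conventions. Two conventions could interfere: the phase in the tableau representation of $F^\dag\Z_iF$, and the explicit formula \cref{eq:chi_phase_formula}. I would avoid both by arguing purely at the operator level, as above, rather than through the explicit formula. The only facts needed are that conjugation is multiplicative and that products of commuting Z-strings carry no phase.
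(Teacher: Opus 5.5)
Your proof is correct and follows the paper's own argument. The paper multiplies the defining identities $\Z'_{\v k}\Z_{\v v(\v k)}=(-1)^{\chi(\v k)}\I$ and $\Z'_{\v l}\Z_{\v v(\v l)}=(-1)^{\chi(\v l)}\I$ and compares the product with the identity for $\v k+\v l$. You make explicit, via uniqueness of $\v v$ and the sign, two things the paper's one-line chain uses silently: the additivity $\v v(\v k+\v l)=\v v(\v k)+\v v(\v l)$, and the reordering of the commuting Z-strings.
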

\begin{proof}
$
(-1)^{\chi(\v k+\v l)}\I
\overset{\eqref{eq:chi_defn}}=
\Z_{\v k+\v l}'\Z_{\v v(\v k+\v l)}
=
\Z_{\v k}'\Z_{\v v(\v k)}
\Z_{\v l}'\Z_{\v v(\v l)}
\overset{\eqref{eq:chi_defn}}=
(-1)^{\chi(\v k)+\chi(\v l)}
\I$.
\end{proof}

\begin{lem}
\label{lem:trivial_syndrome_phase_evaluation}
Let
\[
F^\dag\Z_iF=\i^{\phi_i}\X_{\v x_i}\Z_{\v z_i},
\qquad
\v \phi=(\phi_1,\ldots,\phi_r)^\T\in\mathbb Z_4^r,
\qquad
X=[\v x_1,\ldots,\v x_r],
\qquad
Z=[\v z_1,\ldots,\v z_r].
\]
Then the function $\chi$ in \cref{eq:chi_defn} is
\begin{equation}
\chi(\v u)
:=
\frac{\v \phi^\T\v u-(X\v u)^\T(Z\v u)}2
+\sum_{i<j} u_i u_j\,\v z_i^\T\v x_j
\pmod 2,
\label{eq:chi_phase_formula}
\end{equation}
where the matrix--vector products $X\v u$ and $Z\v u$ are over $\mathbb{F}_2$,
then interpreted as $0/1$ integer vectors in the numerator.
The half is well-defined on this domain because
the numerator is always even.
\end{lem}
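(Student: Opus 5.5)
The plan is to compute $\Z'_{\v u}$ explicitly as a phase times $\X_{\v a}\Z_{\v b}$, and then read off the sign in \cref{eq:chi_defn} by comparing with the Hermitian Pauli having the same binary symplectic vector. Throughout I take $\Z_{\v u}=\prod_i \Z_i^{u_i}$ in increasing index order, and $\v a:=X\v u$, $\v b:=Z\v u$ over $\mathbb F_2$.

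\textbf{Step 1: expand the product.} Since conjugation by $F$ is a homomorphism,
\begin{equation*}
\Z'_{\v u}=\prod_{i:\,u_i=1}^{\rightarrow}F^\dag\Z_iF=\i^{\v\phi^\T\v u}\prod_{i:\,u_i=1}^{\rightarrow}\X_{\v x_i}\Z_{\v z_i},
\end{equation*}
with the product taken in increasing $i$. I will move every $\X$ factor to the left.

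\textbf{Step 2: reorder and collect the cross term.} Moving $\X_{\v x_j}$ left past $\Z_{\v z_i}$ for $i<j$ gives a factor $(-1)^{\v z_i^\T\v x_j}$. The product then becomes
\begin{equation*}
(-1)^{\sum_{i<j}u_iu_j\,\v z_i^\T\v x_j}\,\X_{\v a}\Z_{\v b}.
\end{equation*}
The exponents may be reduced mod $2$ because powers of $\X$, and separately powers of $\Z$, on a single qubit commute and square to $\I$. This accounts for the cross term in \cref{eq:chi_phase_formula}.

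\textbf{Step 3: convert to Hermitian form.} Since $\Y=\i\X\Z$, the Hermitian Pauli with symplectic vector $(\v a\mid\v b)$ is $P_{\v a,\v b}=\i^{\v a^\T\v b}\X_{\v a}\Z_{\v b}$. Here $\v a^\T\v b$ is the integer overlap count, i.e.\ $(X\v u)^\T(Z\v u)$ with $0/1$ entries. Hence
\begin{equation*}
\Z'_{\v u}=\i^{\v\phi^\T\v u-(X\v u)^\T(Z\v u)}(-1)^{\sum_{i<j}u_iu_j\,\v z_i^\T\v x_j}\,P_{\v a,\v b}.
\end{equation*}
Both $\Z'_{\v u}$ and $P_{\v a,\v b}$ are Hermitian, and $\Z'_{\v u}$ is Hermitian because it is unitarily conjugate to the Hermitian $\Z_{\v u}$. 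So the overall phase is real, which forces the $\i$-exponent to be even. This proves the well-definedness claim about the half.

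\textbf{Step 4: identify the sign and check $L$-independence.} For $\v u\in\Phi_L$, the condition $R\v u=\v\rho(\I^{\otimes r}\otimes L)$ says that $\v a$ vanishes on the first $r$ qubits and that $(\v a,\v b)$ agrees with $L$ on the last $k$. Then $P_{\v a,\v b}$ factorises as $\Z_{\v v(\v u)}\otimes L$, where $\v v(\v u)$ is the first $r$ entries of $\v b$. This factorisation holds because the Hermitian-form phase is qubitwise, and the standard Paulis in $\mathcal P_k$ are exactly the Hermitian forms. Multiplying by $\Z_{\v v(\v u)}(\I^{\otimes r}\otimes L)$ squares each factor to $\I$, leaving $(-1)^{\chi(\v u)}\I$ with $\chi$ as in \cref{eq:chi_phase_formula}. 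This expression involves only $\v u$, $\v\phi$, $X$ and $Z$, so it is independent of $L$.

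I expect the main obstacle to be the sign bookkeeping in Steps 2--3. Specifically, I must check that the ordering convention for $\Z_{\v u}$ matches the $i<j$ direction of the cross term. I must also justify that the $\i^{\v a^\T\v b}$ Hermitian-form correction coincides qubitwise with the conventions defining $L\in\mathcal P_k$ and $\Z_{\v v}$. Neither check is deep, but off-by-sign errors would be easy to make there.
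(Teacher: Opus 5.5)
Your proposal is correct and follows essentially the same route as the paper. Both arguments expand $\Z'_{\v u}$ as $\i^{\v\phi^\T\v u}$ times a product of $\X_{\v x_j}\Z_{\v z_j}$, move the X parts left to collect the $\sum_{i<j}u_iu_j\,\v z_i^\T\v x_j$ cross term, and then absorb the $\i^{(X\v u)^\T(Z\v u)}$ correction from the Y factors so as to match $\Z_{\v v(\v u)}(\I^{\otimes r}\otimes L)$.

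Your Hermiticity argument for why the numerator is even is a small but welcome addition, since the paper only asserts this. Your ordering worry is moot: the operators $F^\dag\Z_iF$ pairwise commute, so $\sum_{i<j}u_iu_j\,\v z_i^\T\v x_j$ and $\sum_{i<j}u_iu_j\,\v x_i^\T\v z_j$ agree mod 2.
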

\begin{proof}
Consider first
\begin{align}
\Z_{\v u}'
&=\prod_{j \in \supp \v u} F^\dag \Z_j F \notag \\
&=\i^{\v \phi^\T \v u} \prod_{j \in \supp \v u}
\X_{\v x_j} \Z_{\v z_j}
\end{align}
We want to express this product
as one pure-X Pauli followed by one pure-Z Pauli.
In moving
all X components to the left and
all Z components to the right,
we accumulate factors of $-1$
because Paulis either commute or anticommute.
Specifically,
each time we move an $\X_{\v x_j}$
past a $\Z_{\v z_i}$ for $i <j$,
we pick up a $(-1)^{\v z_i^\T \v x_j}$ factor.
Thus,
the total number of $-1$ factors accumulated is
$\sum_{i <j} u_i u_j \v z_i^\T \v x_j$.
The X-component vector is $\sum_{j \in \supp \v u} \v x_j =X\v u$;
similarly for Z.
So the result is
\begin{equation}
\label{eq:raw_transformed_unencoded_product}
\Z'_{\v u}
=
\i^{\v \phi^\T\v u}
(-1)^{\sum_{i<j} u_i u_j\,\v z_i^\T\v x_j}
\X_{X\v u}\Z_{Z\v u}.
\end{equation}
Substituting this into \cref{eq:chi_defn},
we see that $\X_{X\v u}\Z_{Z\v u}$
equals $\Z_{\v v(\v u)}(\I^{\otimes r}\otimes L)$,
but without the powers of $\i$
from any $\Y$ tensor factors that appear in
$\Z_{\v v(\v u)}(\I^{\otimes r}\otimes L)$.
There are $(X\v u)^\T(Z\v u)$ such tensor factors
i.e.\ $\i^{(X\v u)^\T(Z\v u)} \X_{X\v u}\Z_{Z\v u}
=\Z_{\v v(\v u)}(\I^{\otimes r}\otimes L)$.
Substituting this into \cref{eq:raw_transformed_unencoded_product}
then into \cref{eq:chi_defn} implies the result.
\end{proof}
\section{Alternative Clifford-Error Analysis}
\label{sec:alternative_clifford_error_analysis}
In this section we go through
\cref{eg:d3_color_code_acceptance_probability},
but instead treat the Clifford error $E$
as a superposition of Paulis.
This alternative analysis is computationally slower but
recovers the same results,
is potentially more intuitive,
and can handle general unitary errors beyond Clifford.

Restricting focus to data qubits $(0, 1, 3)$,
write $E =-(\H_+)_0 (\H_-)_{1 3}$ as $(\Z\I\I) \H_-^{\otimes 3}$
up to a global phase, then
\begin{align}
	\H_-^{\otimes 3}
	&=\frac{1}{2^{3/2}} (\X -\Y)^{\otimes 3} \notag \\
	&=\frac{1}{2^{3/2}} `\big[
		(\X\X\X -\Y\Y\Y) +
		(\Y\Y\X -\X\X\Y) +
		(\Y\X\Y -\X\Y\X) +
		(\X\Y\Y -\Y\X\X)], \notag \\
	(\Z\I\I) \H_-^{\otimes 3}
	&=\frac{1}{2}(\Z\I\I -\I\Z\I -\I\I\Z -\Z\Z\Z)
	`\Big[\frac{1}{\sqrt{2}}(\X\X\X -\Y\Y\Y)].
\end{align}
The term in the square brackets
is a logical representative of $\H_+$,
so stabilises $\ket{\overline{T}}$;
we can therefore ignore it.
The remaining factor is
an equal superposition of four terms,
only the last of which is a valid logical operator (logical Z).
At this point we cannot yet conclude
$E$ leads to a logical error 1/4 of the time
because the four terms are still in superposition
[for example:
$\frac{1}{\sqrt{2}}(\X\ket{\T} +\Y\ket{\T})$ does not mean
an X or Y error on $\ket{\T}$ each with probability 1/2;
it equals $\H_+\ket{\T} =\ket{\T}$].
By \cref{lem:insert_projector},
we can consider the effect of this superposition on
an imaginary stabiliser measurement round.
The superposition collapses to one of the four terms
with equal probability:
the last term has trivial syndrome and leads to a $\overline{Z}$ error;
the other three terms have distinct nontrivial syndromes and lead to
abortion (in the case of error detection)
or identity (in the case of error correction).
Z errors maximally affect $\ket{T}$ states
i.e.\ rotate them to the orthogonal state.

To conclude,
the probability to observe a trivial syndrome is
$\pr(\v s =\v 0) =1/4$,
in which case the fidelity of
the resulting state to $\ket{\overline{T}}$ is 0.
For physical qubit count $n$,
this analysis runs in $\mathcal O(2^n)$ time
since $w \le n$ single-qubit Clifford errors
expands to a superposition of $\mathcal O(2^w)$ Pauli terms,
each of which must be analysed individually.
\section{Noise Model}
\label{sec:noise_model}
In this paper we use the standard depolarising noise model,
in which the following \emph{error processes} occur each with probability $p$
(the characteristic noise level of the circuit):
\begin{enumerate}
	\item each qubit preparation prepares the orthogonal state;
	\item each qubit measurement reports the opposite outcome;
	\item each 1-qubit gate (including idle) is followed by
	a Pauli drawn randomly from $`{\X, \Y, \Z}$;
	\item each 2-qubit gate is followed by a Pauli drawn randomly from
	$`{\I, \X, \Y, \Z}^{\otimes 2} \setminus `{\I^{\otimes 2}}$.
	\label[process]{item:2_qubit_depolarisation}
\end{enumerate}
These error processes give rise to 1, 1, 3, and 15 disjoint error events respectively
e.g.\ the 2-qubit depolarising channel (\cref{item:2_qubit_depolarisation}) after a CX
gives rise to the $\I_\d \Y_3$ error event that occurs with probability $p/15$
in \cref{fig:2_fault_configuration}.

Constructing a fault set $\mathcal F$ requires
the error events be \emph{independent} rather than \emph{disjoint}~\cite[\S 4.1]{Umbrarescu2026}.
Fortunately,
Granet et al.~\cite[\S III.C.1]{Granet2025} show how to factorise
any channel $\mathcal N(\rho) =\sum_{P \in \mathcal{P}_k} p_P P \rho P$
with disjoint Pauli error events into a sequential composition
$\bigcirc_{P \in \mathcal{P}_k \setminus `{\I^{\otimes k}}} \mathcal N_{q_P, P}$
of independent channels
$\mathcal N_{q_P, P}(\rho) :=(1 -q_P)\rho +q_P P \rho P$
each with one Pauli error event.
Applying this to the 1-qubit depolarising channel,
where $p_\I =1 -p$ and $p_\X =p_\Y =p_\Z =p/3$, yields
\begin{equation}
	q_\X =q_\Y =q_\Z =\frac{1}{2}`\Big[1 - `\Big(1 -\frac{4}{3}p)^{1/2}].
\end{equation}
Applying this to the 2-qubit depolarising channel,
where $p_P =p/15~\forall P \in \mathcal P_2 \setminus `{\I^{\otimes 2}}$,
yields
\begin{equation}
	q_P =\frac{1}{2}`\Big[1 -`\Big(1 -\frac{16}{15}p)^{1/8}]
	~\forall P \in \mathcal P_2 \setminus `{\I^{\otimes 2}}.
\end{equation}
We use these probabilities to make our disjoint error events independent.
\section{Logical Error Rate per Kept Shot}
\label{sec:logical_error_rate_per_kept_shot}
Consider all error events within
the final double check minus the second layer of T gates
e.g.\ for distance 3, this is the circuit before the dashed line in
\cref{fig:all_2_fault_configurations}.
Define the signature $\v \sigma$ of each fault over all detectors
that are not probabilistic for any considered error event;
namely:
	all detectors within,
	plus those corresponding to Z stabilisers after,
the double check.
The logical error rate per kept shot of that double check is
\begin{align}
\LER(p)
&=\pr(\text{malignant} \mid \text{kept}; p) \notag \\
&=\frac{\pr(\text{malignant} \cap \text{kept}; p)}{\pr(\text{kept}; p)}.
\end{align}
Recall that a shot is kept only if no detectors flip,
so we need only consider the set $\mathcal{U}
:=`{\mathcal{C} \subseteq \mathcal{F}: \v\sigma(\mathcal{C}) =\v 0}$
of all \emph{undetectable} configurations.
This turns out to be one of the most impactful optimisations in our analysis,
since $|\mathcal{U}|$ is much smaller than the total number $2^{|\mathcal F|}$ of configurations.

Write the second layer of T gates as $\T_{\mathcal L} \T^\dag_{\mathcal S}$
using $\mathcal L$ and $\mathcal S$ from \cref{lem:logical_H_is_transversal}.
Each resultant effect $P \in \mathcal P_n$
propagates past this layer into a Clifford error $E =\T_{\mathcal L} \T^\dag_{\mathcal S}
P (\T_{\mathcal L} \T^\dag_{\mathcal S})^\dag$.
It then has
probability $\kappa(P) :=`<\overline{T}| E^\dag \upPi_{\mathcal G}E |\overline{T}>$
of being kept (by \cref{lem:insert_projector}),
and probability $\mu(P) =\kappa(P)[1 -\varphi(E)]$ of being kept \emph{and} malignant,
where $\varphi(E) \in `{0, 1}$ is the fidelity referred to in \cref{prop:resulting_fidelity}.
E.g.\ the resultant effect $P =\X_{01}\Y_3$ in
\cref{eg:d3_color_code_acceptance_probability}
has $\kappa(P) =\mu(P) =1/4$.
Then
\begin{equation}
	\LER(p)
	=\frac
	{\sum_{\mathcal{C} \in \mathcal{U}} \pi_\mathcal{C}(p) \mu`\big(P(\mathcal{C}))}
	{\sum_{\mathcal{C} \in \mathcal{U}} \pi_\mathcal{C}(p) \kappa`\big(P(\mathcal{C}))}.
	\label{eq:ler_per_kept_shot}
\end{equation}
The sum $\sum_{\mathcal{C} \in \mathcal{U}}$ can contain up to $2^{|\mathcal F|}$ terms,
making direct evaluation intractable,
but since $\pi_\mathcal{C}(p) =\upTheta(p^{|\mathcal{C}|})$ for $p \ll 1$
we can approximate it by considering only low-fault-count configurations.
So partition $\mathcal{U} =\mathcal{U}_0 \cup \mathcal{U}_1 \cup \dots \cup \mathcal{U}_{|\mathcal F|}$
where $\mathcal{U}_k :=`{\mathcal{C} \in \mathcal{U}: |\mathcal{C}| =k}$,
and substitute this and \cref{eq:configuration_probability}
into \cref{eq:ler_per_kept_shot}:
\begin{align}
	\LER(p)
	&=\frac
	{\sum_{k=0}^{|\mathcal{F}|} g_k[\mu; p]}
	{\sum_{k=0}^{|\mathcal{F}|} g_k[\kappa; p]}, \\
	g_k[h; p]
	&:=\sum_{\mathcal{C} \in \mathcal{U}_k} h`\big(P(\mathcal{C}))
	\prod_{f \in \mathcal{C}} \frac{\pi_f(p)}{1 -\pi_f(p)}
	\qquad\forall h:\mathcal P_n \to [0, 1].
\end{align}
This is an exact expression;
its tractable approximation to order $l$ is given by
\begin{equation}
	\LER(p) =\frac
	{\sum_{k=0}^l g_k[\mu; p]}
	{\sum_{k=0}^l g_k[\kappa; p]} +\mathcal{O}(p^{l +1}).
\end{equation}

\end{document}